\providecommand*{\input@path}{}
\g@addto@macro\input@path{{.}{include/}{../include/}}
\newif\ifspringer
\newif\ifec
\newif\ifanon
\newif\ifjair
    \journalname{}
    \institute{A. M. Kazachkov
                \email{akazachkov@ufl.edu}%
    }
            \newenvironment{acks}{\paragraph{\footnotesize Acknowledgments.}\itshape\footnotesize}{\par}
    \newcommand{\citepar}[1]{\citep{#1}}
    \newcommand{\citeaut}[1]{\cite{#1}}
        \newcommand{\citeaut}[1]{\citeA{#1}}
        \newcommand{\citepar}[1]{\shortcite{#1}}
        \newcommand{\citepar}[1]{\cite{#1}}
        \newcommand{\citeaut}[1]{\citet{#1}}
\title{Symmetrically Fair Allocations of Indivisible Goods}
\author{Authors Redacted for Double-Blind Review}
        \author{\name Connor Johnston \email cjohnston1@ufl.edu \\
               \name Aleksandr M. Kazachkov \email akazachkov@ufl.edu \\
               \addr University of Florida,\\
               Gainesville, FL  32605 USA
               }
        \author{
            Connor Johnston
            \and
            Aleksandr M. Kazachkov
        }
\date{\today{}}
\newcommand{\genericallocation}[1]{\mathcal{#1}} %
\newcommand{\genericbundle}[2]{#1_{#2}} %
\newcommand{\alloc}[1][A]{\genericallocation{#1}}
\newcommand{\bundle}[1]{\genericbundle{A}{#1}} %
\newcommand{\itemvalue}[2]{v_{#1 #2}} %
\newcommand{\bundlevalue}[2]{v_{#1}(#2)} %
\newcommand{\maxItemValAgentBundle}[2]{\bar{v}_{#1}({#2})} %
\newcommand{\minItemValAgentBundle}[2]{\underline{v}_{#1}({#2})} %
\newcommand{\nTuples}{\mathcal{T}}
\newcommand{\nTuplesForAgent}[1]{\nTuples^{#1}}
\newcommand{\nTupleElementForAgent}[2]{\nTuplesForAgent{#1}_{#2}}
\newcommand{\itemallocated}{\mathrm{item\_allocated}}
\newcommand{\True}{\mathrm{True}}
\newcommand{\False}{\mathrm{False}}
\newcommand{\player}[1]{#1}
\newcommand{\toOne}[1]{{\color{purple} \fbox{#1}}}
\newcommand{\toTwo}[1]{{\color{blue!50!black} \fbox{#1}}}
\begin{document}
\ifspringer
\renewcommand\subclassname{{\bfseries Mathematics Subject Classification}\enspace}
\fi

\ifec
    \begin{titlepage}
\else
\maketitle
\fi

\begin{abstract}
We consider allocating indivisible goods with provable fairness guarantees that are satisfied regardless of which bundle of items each agent receives. Symmetrical allocations of this type are known to exist for divisible resources, such as consensus splitting of a cake into parts, each having equal value for all agents, ensuring that in any allocation of the cake slices, no agent would envy another. For indivisible goods, one analogous concept relaxes envy freeness to guarantee the existence of an allocation in which any bundle is worth as much as any other, up to the value of a bounded number of items from the other bundle. Previous work has studied the number of items that need to be removed. In this paper, we improve upon these bounds for the specific setting in which the number of bundles equals the number of agents.
    
Concretely, we develop the theory of symmetrically envy free up to one good, or symEF1, allocations.
We prove that a symEF1 allocation exists if the vertices of a related graph can be partitioned (colored) into as many independent sets as there are agents.
This sufficient condition always holds for two agents, and for agents that have identical, disjoint, or binary valuations.
We further prove conditions under which exponentially-many distinct symEF1 allocations exist.
Finally, we perform computational experiments to study the incidence of symEF1 allocations as a function of the number of agents and items when valuations are drawn uniformly at random.
\end{abstract}

\ifec
\maketitle
    \end{titlepage}
\else
\fi

\section{Introduction \& Preliminaries}
\label{sec:intro}

Suppose a humanitarian organization distributes donations to community members (referred to as \emph{agents}) with heterogeneous preferences, such as during disaster relief efforts.
To facilitate logistics, the organization prepackages boxes of items that it does \emph{not} assign to specific agents, but rather agents may receive an arbitrary bundle.
For example, the boxes might be set out on a table for decentralized pickups, or they may arrive as part of an aid airdrop.
Afterwards, each agent can observe all other agents' bundles and evaluates the \emph{fairness} of the outcome, such as experiencing \emph{envy} if they value another bundle more than their own.
Our central research question is whether there exists a partition the items into bundles that \emph{all} appear fair to each agent, regardless of which bundle each agent receives.

Specifically, we study the nonwasteful allocation of indivisible goods among agents with known additive nonnegative valuations, under \emph{fairness} and \emph{symmetry} constraints.
We adopt the common fairness standard of \emph{envy freeness up to one good} (EF1)~\citepar{LipMar04_EF1-implicit},
which enforces that the maximum envy (additional value) any agent has for a bundle they do not receive is at most the value of their favorite item in this other bundle.
An EF1 allocation can always be efficiently found, such as if agents pick goods in round-robin fashion.
Imposing symmetry is a much stronger requirement than EF1:
we seek a set of bundles such that any assignment of the bundles to the agents is an EF1 allocation.
We call such a partition \emph{symmetrically envy free up to one good}, or \emph{symEF1}.

A symmetrically fair allocation is intuitively appealing.
One practical motivation comes from the humanitarian logistics setting mentioned above, for which there is currently a lack of provable fairness guarantees.
As another example, consider designing a video game in which each agent chooses a character possessing traits such as health, speed, and strength.
To improve game play, the set of available characters should be roughly evenly matched, regardless of which character is picked.

A symEF1 partition also has theoretical significance,
through the property that any permutation of the bundles among the agents is an EF1 allocation.
This has applications to group fairness~\citepar{GoldHoll22_consensus-k-splitting,ManSukS22_general-symEFk} and fair division over time~\citepar{BenKazProPso18,BenKazProPsoZen23+_fair-efficient-online-allocations}.
In particular, in the latter online context, when one item arrives at each time period, an algorithm that provides tight envy guarantees is simply assigning the item to an agent uniformly at random.
This idea does not directly extend to the setting in which the batch size in each period is more than one.
However, if each period admits a symEF1 partition, then selecting a random permutation of the bundles is effectively randomizing over EF1 allocations,
which leads to the a simpler proof technique for the batch case, compared to that of \citeaut{BenKazProPso18}.

\subsection{Contributions}
\label{sec:contributions}
We formally define symEF1 allocations in \cref{sec:notation}, and we connect this concept to related literature in \cref{sec:related-literature}.
Since symEF1 is a much stronger condition than EF1, existence of such allocations might seem unattainable in general.
To assuage this natural skepticism, we lead with positive, intuitive special cases in \cref{sec:existence-examples}.
However, we confirm in \cref{sec:not-guaranteed} that there exist instances for which the symEF1 condition is unsatisfiable.

Nevertheless, we return to a positive outlook in \cref{sec:symEF1-sufficiency}, containing our main results.
We prove a sufficient condition for the existence of a symEF1 allocation via a vertex coloring problem on an auxiliary graph constructed from agent valuations.
This implies \cref{cor:n2-exists}, that a \emph{balanced} symEF1 allocation \emph{always} exists for two agents.
We then show that symEF1 allocations are not isolated occurrences:
even with two agents and four items, at least two distinct allocations are guaranteed to be symEF1.
Further, we employ our graph construction to identify \emph{exponentially-many} symEF1 allocations.

In \cref{sec:comparision-with-existing-criteria}, we briefly consider how enforcing symmetry relates to other fairness criteria, particularly highlighting differences with solutions that satisfy envy freeness up to any good and maximum Nash welfare.
Finally, in \cref{sec:simulation}, we present computational results with a simulation study on the incidence of symEF1 allocations as a function of the number of agents and items.
Our results inspire several conjectures that we state in \cref{sec:conclusion}.

\subsection{Notation}
\label{sec:notation}

We assume that there are $n$ agents and $m$ items.
For each item $j \in [m] \defeq \{1,\ldots,m\}$, agent~$i \in [n]$ has a fixed, known value $\itemvalue{i}{j} \in [0,1]$.
We assume that valuations are \emph{additive}: for any bundle of items $A \subseteq [m]$, the value agent~$i$ has for $A$ is 
    $\bundlevalue{i}{A} \defeq \sum_{j \in A} \itemvalue{i}{j}$.
An \emph{instance} of our problem refers to a fixed $n$, $m$, and set of valuations $\{v_{ij}\}_{i \in [n], j\in [m]}$.

A \emph{partition} is a set of $n$ \emph{bundles}, $\alloc = (\bundle{k})_{k \in [n]}$, in which each item $j \in [m]$ is contained in precisely one bundle.
An \emph{assignment} of the bundles to agents matches each bundle to an agent such that each agent receives exactly one bundle.
An \emph{allocation} typically refers to a partition and particular assignment.

Informally, a nonsymmetric fairness axiom is achievable for an instance when there exists a partition and specific assignment in which all agents agree that their assigned items are fair with respect to this axiom.
The symmetric version of the same fairness condition requires that the property holds for all possible assignments of bundles to agents.
For existence results under this kind of symmetry,
the assignment component of an allocation is immaterial,
and so we may say ``allocation'' without specifying who receives the bundles.

One common fairness criterion is \emph{envy freeness}~\citepar{Fol67_resource-allocation-EF},
in which each agent weakly prefers the bundle they are assigned over any other agent's bundle; i.e.,
    $
        \bundlevalue{i}{\bundle{i}} \ge \bundlevalue{i}{\bundle{k}}
    $
for all $i,k \in [n]$.
The symmetric version requires that
    $
        \bundlevalue{i}{\bundle{k}} \ge \bundlevalue{i}{\bundle{\ell}}
    $
and
    $
        \bundlevalue{i}{\bundle{\ell}} \ge \bundlevalue{i}{\bundle{k}},
    $
i.e.,
    $
        \bundlevalue{i}{\bundle{k}} = \bundlevalue{i}{\bundle{\ell}},
    $
for all $i,k,\ell \in [n]$.
Since valuations are additive, it follows that in a symmetrically envy free allocation, every bundle is worth exactly $1/n$ to every agent.
This coincides with imposing the symmetric version of \emph{equitability}~\citepar{BraJonKla06_better-ways-to-cut-cake,DubSpa61_Ham-Sandwich},
in which all agents should receive the same value from their assigned bundle (and symmetry would mean that all bundles have the same value for all agents).

An envy free allocation may not exist for indivisible goods.
The primary relaxation we consider is \emph{envy freeness up to one good} (EF1)~\citepar{LipMar04_EF1-implicit,Bud10_EF1-Explicit,Bud11-ef1-approximate-ceei}.
For convenience, define the maximum-valued item in a bundle for agent $i$ as
	\begin{equation*}
        \maxItemValAgentBundle{i}{A} \defeq \max \{ \itemvalue{i}{j} \suchthat j \in A \}.
    \end{equation*}
An agent $i$ is said to be \emph{EF1-satisfied (with their assigned bundle $\bundle{i}$ in an allocation)} if the agent weakly prefers bundle $\bundle{i}$ over any other bundle $\bundle{\ell}$, $\ell \in [n]$, after removing their favorite item from bundle $\bundle{\ell}$; i.e.,
      $\bundlevalue{i}{\bundle{i}} \ge \bundlevalue{i}{\bundle{\ell}} - \maxItemValAgentBundle{i}{\bundle{\ell}}.$
An allocation is said to be EF1 if every agent is EF1-satisfied with their assigned bundle.
EF1 allocations always exist and can be computed efficiently.
The first goal of the project is investigating the corresponding symmetric variant, defined formally below.

\begin{definition}[Symmetrically Envy-Free up to One Good]
    A partition $(\bundle{1},\ldots,\bundle{n})$ is \emph{symmetrically envy free up to one good}, or \emph{symEF1}, when
    every agent $i \in [n]$ is EF1-satisfied when assigned any bundle $\bundle{k},\, k \in [n]$;
    i.e., for every $i,k,\ell \in [n]$, it holds that
    \begin{equation*}
      \bundlevalue{i}{\bundle{k}} \ge \bundlevalue{i}{\bundle{\ell}} - \maxItemValAgentBundle{i}{\bundle{\ell}}.
    \end{equation*}
\end{definition}

\subsection{Related Literature}
\label{sec:related-literature}

This paper adds to a rich literature on the fair allocation of indivisible resources~\citepar{Stei49_first-fair-paper},
building on the notions of envy freeness introduced for \emph{divisible} fair division (``cake cutting'')~\citepar{GamSte58_envy-freeness,Fol67_resource-allocation-EF,Var74_Envy-Free}
and later relaxations to the indivisible setting~\citepar{LipMar04_EF1-implicit,Bud10_EF1-Explicit}.

General symmetric fairness guarantees are attractive theoretically but more challenging to obtain than their nonsymmetric variants,
and may seem unattainable even in the simplest cases,
such as when there are only two agents.
However, optimism can be drawn from the cake cutting context.
An early result to this end is the ``Ham Sandwich Theorem'':
a single hyperplane exists that simultaneously bisects a set of $d$ probability measures in $\R^d$~\citepar{Alon22_fair-partitions-survey}.
This implies that a cake can be split into two halves, such that both parts are valued exactly the same by all (even more than two) agents,
i.e., a so-called \emph{consensus halving} exists in which every agent would be envy free receiving either part~\citepar{Alon87_Discrete-BorUla,SimSu03_Consensus-halving-BorUla}.
As this guarantee holds for more than two agents,
it has immediate implications for \emph{group fairness},
in which the two halves of the cake represent resources that will be shared by a subset of the agents.
Generalizing to $k$ groups, \citeaut{SimSu03_Consensus-halving-BorUla}, then later \citeaut{GoldHoll22_consensus-k-splitting} and \citeaut{ManSukS22_general-symEFk}, find positive results for \emph{consensus $k$-splitting} and the more specific \emph{consensus $1/k$-division},
in which every agent values each of the $k$ bundles equally.
This can be cast as a symmetric fairness problem with respect to either \emph{envy freeness} or \emph{equitability}~\citepar{BraJonKla06_better-ways-to-cut-cake,DubSpa61_Ham-Sandwich}.
Beyond existence, for symmetry in cake cutting settings, the central research questions involve bounding the number of cuts required.

For indivisible goods, \citeaut{ManSukS22_general-symEFk} study \emph{consensus $1/k$-division up to $c$ goods},
where items must be partitioned into $k$ bundles such that each agent is \emph{envy free up to $c$ goods} with any bundle.
Using discrepancy theory, the authors provide bounds on the value of $c$ such that a consensus $1/k$-division up to $c$ goods is guaranteed to exist.
The definition of symEF1 is equivalent when $c=1$ and $k$ equals the number of agents,
but \citeaut{ManSukS22_general-symEFk} do not analyze this particular case in depth.

\section{Special Cases: Existence Under Identical, Disjoint, and Binary Valuations}
\label{sec:existence-examples}

Symmetry adds a stringent constraint to the space of feasible allocations.
We begin by presenting special cases that suffice for existence of a symEF1 allocation.
These serve to reassure us that somewhat broad classes of instances are amenable to symmetrical fairness.
In addition, the key proof idea---coordinating across round-robin allocations---foreshadows our later technical approach.

A round-robin allocation is obtained by having agents pick in a fixed order for every round until all items are selected.
Whenever it is agent $i$'s turn, they add their highest-valued item that remains unallocated to their bundle, $\bundle{i}$.
It is a basic result that this is an EF1 allocation.

We introduce extra notation to facilitate our discussion of round-robin allocations both in this section and later in the paper.
For $i \in [n]$, let $\sigma^i$ be the ranking of the item set $[m]$ in order from most to least favorite item for agent $i$;
i.e., for $j \in [m]$, $\sigma^i(j)$ returns the $j$-th favorite item for agent $i$.
Formally, $\sigma^i$ satisfies the condition $v_{i}(\sigma^{i}(j)) \ge v_{i}(\sigma^{i}(j+1))$ for all $j \in [m-1]$.
This means that $\sigma^{i}(1)$ is the favorite item for agent $i$ and $\sigma^{i}(m)$ is the least favorite item for agent $i$.
For convenience, define $\sigma^{i}(j) \defeq \emptyset$ if $j > m$.

We next define an \emph{agent's round-robin allocation},
obtained by simulating a round-robin procedure for an instance in which all agents are identical.
Bundle $\bundle{\ell}^{i}$ will contain the $\ell$-th item picked by the agent in each of the $\ceil{m/n}$ rounds of the algorithm.

\begin{definition}[Agent's Round-Robin Allocation]
\label{def:agent-round-robin-alloc}
    For agent $i \in [n]$, the \emph{agent's round-robin allocation} is 
    $\alloc^{i} = (\bundle{1}^{i}, \ldots, \bundle{n}^{i})$,
    where
    \begin{equation*}
        \bundle{\ell}^{i} \defeq
            \left\{
                \sigma^{i}( \ell + n(t-1) )
                \text{ for $t \in \{1,\ldots,\ceil{m/n}\}$}
            \right\}
            =
            \left\{
                \sigma^{i}(\ell), \ldots, \sigma^{i}(\ell + n \cdot \floor{m/n})
            \right\}. %
    \end{equation*}
\end{definition}

\begin{lemma}
\label{lem:agent-round-robin-alloc-EF1}
    Agent $i \in [n]$ is EF1-satisfied with any bundle in the round-robin allocation $\alloc^{i}$.
\end{lemma}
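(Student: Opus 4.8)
The plan is to verify the defining inequality $v_i(\bundle{k}^{i}) \ge v_i(\bundle{\ell}^{i}) - \bar{v}_i(\bundle{\ell}^{i})$ directly for every pair $k,\ell \in [n]$, using the explicit description of the bundles in terms of the ranking $\sigma^{i}$; this is the ``round-robin is EF1'' guarantee specialized to $n$ identical copies of agent $i$, but established uniformly across all pairs of pick-positions rather than only for a single assignment. First I would record two structural facts. Since $\bundle{\ell}^{i}$ collects precisely the items ranked $\ell, \ell+n, \ell+2n, \ldots$ by agent $i$, its highest-ranked (hence most valuable) member is $\sigma^{i}(\ell)$, so $\bar{v}_i(\bundle{\ell}^{i}) = v_i(\sigma^{i}(\ell))$. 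Removing this favorite item therefore leaves the tail $v_i(\bundle{\ell}^{i}) - \bar{v}_i(\bundle{\ell}^{i}) = \sum_{t=1}^{\ceil{m/n}-1} v_i(\sigma^{i}(\ell + nt))$, while $v_i(\bundle{k}^{i}) = \sum_{t=0}^{\ceil{m/n}-1} v_i(\sigma^{i}(k + nt))$. Throughout I adopt the stated convention that $\sigma^{i}(j) = \emptyset$ contributes value $0$ when $j > m$, which silently absorbs both a final partial round (when $n \nmid m$) and the degenerate case $m < n$ in which some bundles are empty and the relevant EF1 comparisons hold trivially.

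The heart of the argument is a term-by-term domination between these two sums. I would pair the right-hand summand $v_i(\sigma^{i}(\ell + nt))$ with the left-hand summand $v_i(\sigma^{i}(k + n(t-1)))$ for each $t \in \{1, \ldots, \ceil{m/n}-1\}$. The corresponding ranks satisfy $(\ell + nt) - (k + n(t-1)) = \ell - k + n \ge 1$, because $k \le n$ and $\ell \ge 1$; hence $\sigma^{i}(k + n(t-1))$ is ranked at least as high as $\sigma^{i}(\ell + nt)$ for agent $i$, and the monotonicity built into $\sigma^{i}$ gives $v_i(\sigma^{i}(k + n(t-1))) \ge v_i(\sigma^{i}(\ell + nt))$. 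Summing over $t$ bounds the post-removal tail of $\bundle{\ell}^{i}$ by $\sum_{t=0}^{\ceil{m/n}-2} v_i(\sigma^{i}(k + nt))$, which is a partial sum of the nonnegative terms comprising $v_i(\bundle{k}^{i})$ and is therefore at most $v_i(\bundle{k}^{i})$ itself. Chaining the two bounds yields the claimed inequality.

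The main obstacle, modest as it is, lies in selecting the correct pairing and checking the index shift: one must match position $t$ in bundle $\ell$ (after discarding its top item) against position $t-1$ in bundle $k$, so that the single extra round offset of $n$ in the ranking dominates any deficit $\ell - k$ between the two positions. Once this alignment is fixed, nonnegativity of the valuations and monotonicity of $\sigma^{i}$ reduce the remainder to bookkeeping, and the boundary conventions above dispose of the non-divisible and $m < n$ cases without a separate argument.
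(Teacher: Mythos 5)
Your proof is correct and is essentially the same argument the paper relies on: the paper's proof simply cites the standard ``round-robin is EF1'' guarantee for $n$ identical copies of agent $i$, while you write out that guarantee explicitly via the rank-shift pairing $(\ell+nt)$ versus $(k+n(t-1))$ with $\ell-k+n\ge 1$. The only (minor) refinement is that your single uniform pairing avoids the usual case split on whether $k$ picks before or after $\ell$; the boundary conventions for $n \nmid m$ and empty bundles are handled correctly.
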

\begin{proof}
    The result follows by construction, as an agent's round-robin allocation is akin to $n$ agents identical to agent $i$ selecting items round robin,
    and each of these agents is EF1-satisfied with their assigned bundle.
\end{proof}

Our first special case is \cref{prop:identical-or-disjoint-valuations}, which states that a symEF1 allocation always exists when agents have identical or disjoint valuations.

\begin{proposition}
\label{prop:identical-or-disjoint-valuations}
    Suppose that agents can be partitioned into groups such that (1) all agents in the same group have identical valuations, and (2) agents from different groups have nonzero valuations on mutually disjoint sets of items.
	Then there exists a symEF1 allocation.
\end{proposition}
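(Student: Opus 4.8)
The plan is to reduce the proposition to \cref{lem:agent-round-robin-alloc-EF1} by constructing a single partition that, from the viewpoint of each individual agent, looks exactly like that agent's own round-robin allocation. Let the groups be $G_1,\ldots,G_r$, where every agent in group $G_p$ shares a common valuation, which I will denote $v^{(p)}$, supported on the set $S_p \subseteq [m]$ of items it values positively. By hypothesis the sets $S_1,\ldots,S_r$ are pairwise disjoint. Items valued at zero by every group are null and can be placed arbitrarily at the very end, so the substance of the construction concerns only the positively-valued items.

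First I would run the round-robin procedure of \cref{def:agent-round-robin-alloc} separately for each group $G_p$, using the common valuation $v^{(p)}$ but restricting attention to the items of $S_p$, and spreading them across all $n$ bundles (not merely the $|G_p|$ bundles one might naively reserve for the group). This produces, for each $p$, a partition $B^{(p)}_1,\ldots,B^{(p)}_n$ of $S_p$ into $n$ possibly-empty parts; equivalently, $B^{(p)}_\ell$ is the restriction to $S_p$ of the $\ell$-th bundle of the round-robin allocation of \cref{def:agent-round-robin-alloc} for any agent of $G_p$, since the $S_p$-items are exactly the top items in that agent's preference order. Spreading over all $n$ bundles is essential: if $S_p$ occupied only $|G_p|$ bundles, some bundle could receive two or more $S_p$-items while another received none, and an agent in $G_p$ comparing those two bundles would violate EF1. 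I then define the global partition by $\bundle{\ell} \defeq \bigcup_{p} B^{(p)}_\ell$ for each $\ell \in [n]$, appending the null items arbitrarily. Because the $S_p$ are disjoint, no item is assigned to two bundles, so this is a genuine partition.

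The crux, and the step I expect to require the most care, is verifying the symEF1 condition for every agent and every pair of bundles. Fix an agent $i \in G_p$ and indices $k,\ell \in [n]$. Since $v_i$ is supported on $S_p$ and the other groups' supports are disjoint from $S_p$, additivity collapses agent $i$'s perception of the global bundles onto their group's restricted allocation: $\bundlevalue{i}{\bundle{\ell}} = \bundlevalue{i}{B^{(p)}_\ell}$ and $\maxItemValAgentBundle{i}{\bundle{\ell}} = \maxItemValAgentBundle{i}{B^{(p)}_\ell}$, and likewise for $k$. Hence the required inequality $\bundlevalue{i}{\bundle{k}} \ge \bundlevalue{i}{\bundle{\ell}} - \maxItemValAgentBundle{i}{\bundle{\ell}}$ is exactly the EF1 guarantee of \cref{lem:agent-round-robin-alloc-EF1} applied inside group $G_p$, and therefore holds.

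The main obstacle is precisely ensuring that the zero-valued items contributed by other groups do not disrupt the round-robin guarantee across \emph{all} pairs $k,\ell$, including pairs in which one of agent $i$'s perceived bundles is empty of $S_p$-items. The disjointness hypothesis together with additivity resolves this, since it reduces each agent's entire analysis to its own group's round-robin allocation, where EF1 holds for every pair of bundles by \cref{lem:agent-round-robin-alloc-EF1}. Finally, I would note that the two headline special cases are instances of this construction: identical valuations correspond to a single group, and fully disjoint valuations correspond to singleton groups.
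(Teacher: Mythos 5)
Your proof is correct and follows essentially the same route as the paper: run each agent's (or group's) round-robin allocation restricted to its positively-valued items, spread across all $n$ bundles, and take unions over the disjoint supports so that each agent's EF1 guarantee from \cref{lem:agent-round-robin-alloc-EF1} transfers to the combined partition. The only difference is presentational—the paper reduces to singleton groups and assumes no globally null items "for simplicity," whereas you spell out the group-level bookkeeping and the placement of null items explicitly.
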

\begin{proof}
    For simplicity, we assume that each group is a singleton, as the generalization to a larger number of identical agents is straightforward,
    and that every item is positively valued by some agent.
    For each agent $i \in [n]$, let $\alloc^{i}$ be the agent's round-robin allocation \emph{only on the set of items that the agent has nonzero value}.
    By \cref{lem:agent-round-robin-alloc-EF1}, agent $i$ is EF1-satisfied when receiving any bundle in $\alloc^{i}$.
    Define the complete allocation $\alloc = (\bundle{1},\ldots,\bundle{n})$ such that bundle $\bundle{\ell} = \cup_{i \in [n]} \bundle{\ell}^i$ for each $\ell \in [n]$.
    Note that each item is added to exactly one bundle.
    By the assumption of disjoint valuations, the partition $\alloc$ is symEF1.
\end{proof}

Our next special case tells us that a symEF1 allocation always exists when there are two agents with binary valuations. While this may seem like an overly restrictive condition, a yes or no response on items can be a valuable model for many real world situations.
\begin{proposition}
\label{prop:n=2-binary-valuations}
	For two agents with binary valuations,
	a symEF1 allocation always exists.
\end{proposition}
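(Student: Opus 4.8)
The plan is to exploit the binary structure to reduce symEF1 to a simple balance condition, and then to meet that condition by an explicit, coordinated partition. Write the two agents as $1,2$ and the two bundles as $\bundle{1},\bundle{2}$. Since every $\itemvalue{i}{j}\in\{0,1\}$, the quantity $\maxItemValAgentBundle{i}{\bundle{k}}$ equals $1$ precisely when $\bundle{k}$ contains at least one item that agent $i$ values, and is $0$ otherwise. Let $a_i \defeq \bundlevalue{i}{\bundle{1}}$ and $b_i \defeq \bundlevalue{i}{\bundle{2}}$ count agent $i$'s valued items in each bundle. I would first show that the two nontrivial symEF1 inequalities for agent $i$ hold if and only if $\lvert a_i - b_i\rvert \le 1$: when one bundle holds no item valued by $i$ the corresponding $\maxItemValAgentBundle{i}{\cdot}$ term vanishes and the surviving inequality forces the other bundle's count to be at most one, whereas when both bundles hold a valued item the two inequalities become $a_i \ge b_i-1$ and $b_i \ge a_i-1$, i.e.\ $\lvert a_i-b_i\rvert\le 1$; conversely $\lvert a_i-b_i\rvert\ge 2$ makes the larger count at least $2$, so that bundle's max-item term is $1$ and the reverse inequality is violated. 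Hence it suffices to partition the items so that each agent's set of valued items is split between the two bundles with a difference of at most one.

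To build such a partition, classify each item by its profile $(\itemvalue{1}{j},\itemvalue{2}{j})$ into four types, with counts $n_{11},n_{10},n_{01},n_{00}$ for items valued by both agents, by agent $1$ only, by agent $2$ only, and by neither. Items valued by neither agent affect no balance and may be placed arbitrarily. For each of the three remaining types, I would pair its items and place one member of each pair in $\bundle{1}$ and the other in $\bundle{2}$, so every complete pair contributes $0$ to both agents' imbalances; at most one \emph{leftover} item survives from a type whose count is odd.

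The crux is coordinating these at-most-three leftovers. Adopt the convention that placing a leftover in $\bundle{1}$ contributes $+1$ and in $\bundle{2}$ contributes $-1$ to the relevant imbalance. Then $a_1-b_1$ is the signed sum of the type-$(1,1)$ and type-$(1,0)$ leftovers, and $a_2-b_2$ is the signed sum of the type-$(1,1)$ and type-$(0,1)$ leftovers; the only coupling between the agents is the type-$(1,1)$ leftover, which shifts both balances at once. If $n_{11}$ is even there is no such leftover and no coupling, so placing each single-agent leftover anywhere keeps both imbalances in $\{-1,0,1\}$. If $n_{11}$ is odd, fix its leftover in $\bundle{1}$ and then, for each odd single-agent type, send its leftover to $\bundle{2}$; this cancels the coupling for the affected agent and again leaves each imbalance in $\{-1,0,1\}$. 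In every case $\lvert a_i-b_i\rvert\le 1$ for both agents, so by the reduction the partition is symEF1.

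The step I expect to be the main obstacle is this final coordination: the leftover item valued by \emph{both} agents moves both balances simultaneously, so it cannot be placed independently of the single-agent leftovers. The parity case split on $n_{11}$ is exactly what disentangles these competing requirements and completes the argument.
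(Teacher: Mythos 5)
Your proof is correct, but it takes a genuinely different route from the paper. The paper deliberately omits a direct proof of this proposition and instead obtains it as a special case of \cref{cor:n2-exists}, which for arbitrary two-agent valuations builds the item graph from the indexed $2$-tuples, shows it has no odd cycles (hence is $2$-colorable), and invokes \cref{thm:symEF1} to get a symEF1 allocation that separates every tuple. You instead exploit binarity head-on: your observation that, for $\{0,1\}$ valuations, agent $i$ is symmetrically EF1-satisfied if and only if $\lvert \bundlevalue{i}{\bundle{1}} - \bundlevalue{i}{\bundle{2}}\rvert \le 1$ is a clean and correct reduction (the degenerate cases where one bundle holds no item valued by $i$ are handled properly), and the type-based pairing with the parity case split on $n_{11}$ does achieve that balance for both agents simultaneously --- the doubly-valued leftover is indeed the only coupling, and sending the single-agent leftovers opposite to it resolves it. What your argument buys is an elementary, self-contained proof with an explicit linear-time construction that needs none of the tuple/graph machinery; what it gives up is generality --- it does not extend to non-binary valuations or to more agents, whereas the paper's coloring framework covers all two-agent instances and additionally yields a balanced partition. (Your construction is also essentially a $2$-coloring of a graph whose edges pair items of equal type, so the two arguments are philosophically cousins, but the reduction to the $\lvert a_i-b_i\rvert\le 1$ criterion is specific to the binary setting and does not appear in the paper.)
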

We omit the proof of \cref{prop:n=2-binary-valuations}, as ultimately we generalize this in \cref{cor:n2-exists}.
We state the result in order to help the reader further develop intuition for general settings under which symEF1 allocations will exist.

\section{Existence is Not Guaranteed for Three Agents}
\label{sec:not-guaranteed}
In contrast to the specific settings considered in the previous section, one may imagine that under less restrictive conditions, symEF1 allocations may not exist.
We confirm this suspicion via the following example, %
which implies that for any $n > 2$ agents, a symEF1 allocation may not exist even when valuations are binary.

\begin{example}
\label{ex:not-guaranteed}
	\cref{tab:3agents} gives the values of three agents for four items.
	As there are four items and three bundles,
	some bundle $A_1$ will have two or more items,
	and some agent $i \in [3]$ will have value $v_i(A_1) \ge 2$.
	Since each agent has nonzero value for exactly three of the four items,
	there will be a bundle $A_2$ with $v_i(A_2) = 0$.
	Hence, agent $i$ is not EF1-satisfied when receiving $A_2$.
	Observe that this example can easily be extended to $n$ agents and $n+1$ items.
\exqed
\end{example}
	
	\begin{table}[htb]
	\caption{Valuations for which a symEF1 allocation does not exist with 3 agents and 4 items.}
	\label{tab:3agents}
	\centering
	\begin{tabular}{lcccccc}
		\toprule
	 			& Item $a$ & Item $b$ & Item $c$ & Item $d$ \\
		\midrule
	 	Agent~\player{1} & 1 & 1 & 1 & 0 \\
		Agent~\player{2} & 1 & 1 & 0 & 1 \\
		Agent~\player{3} & 1 & 0 & 1 & 1 \\
	 	\bottomrule
	\end{tabular}
	\end{table}

\section{Existence of SymEF1 Allocations}
\label{sec:symEF1-sufficiency}
\cref{ex:not-guaranteed} in \cref{sec:not-guaranteed} shows that when there are more than two agents and a nontrivial number of items ($m > n$), a symEF1 allocation may not exist.
However, we know from \cref{sec:existence-examples} that rather general families of instances can overcome this barrier.
In this section, we identify a broader sufficient condition for the existence of a symEF1 allocation.
Further, we prove that this sufficient condition is always satisfied for two agents.
We then discuss the uniqueness of symEF1 allocations:
we prove that there are always at least \emph{two} distinct symEF1 allocations
for the two-agent four-item case and give a sufficient condition for \emph{exponentially-many} symEF1 allocations in general.

Our proof extends the intuition of \cref{lem:agent-round-robin-alloc-EF1} that an agent is EF1-satisfied with any bundle in their round-robin allocation.
The challenge is finding a single allocation that aligns all agents' round-robin allocations simultaneously.
We first group the set of items for each agent into so-called ``$n$-tuples''.
We show that an agent is EF1-satisfied with any bundle in an allocation that ``separates'' their $n$-tuples.
Finally, we construct an ``item graph'' such that an $n$-coloring of the graph separates the $n$-tuples for each agent.

\subsection{Separating Tuples of Items}
\label{sec:separating-tuples}

In this section, we partition the set of $m$ items into $\ceil{m/n}$ ``indexed $n$-tuples'' for each agent.
These tuples are constructed by the ordering of each agent's preferences and represent the set of items the agent would pick in one ``round'' when building the agent's round-robin allocation.
Generalizing \cref{lem:agent-round-robin-alloc-EF1}, we show that a symEF1 allocation can be found by ensuring that no bundle contains any items in the same $n$-tuple (for any agent).

We next formally define the set of indexed $n$-tuples for agent $i$,  $\nTuplesForAgent{i} = (\nTupleElementForAgent{i}{1},\ldots,\nTupleElementForAgent{i}{\ceil{m/n}})$.
In this partition, $\nTupleElementForAgent{i}{1}$ is the first $n$ most-valued items for agent $i$, $\nTupleElementForAgent{i}{2}$ is the next $n$, and so on.

\begin{definition}
    The set of \emph{indexed $n$-tuples} for agent $i \in [n]$ is $\nTuplesForAgent{i} = (\nTupleElementForAgent{i}{t})_{t \in [\ceil{m/n}]}$,
    where 
        \begin{equation*}
            \nTupleElementForAgent{i}{t} \defeq \left\{ 
                \sigma^{i}( \ell + n(t-1) )
                \text{ for } \ell \in \left[ \min\{n, m - (t-1)n\} \right] \}
            \right\},
            \quad\mbox{for $t \in [\ceil{m/n}]$.}
        \end{equation*}
\end{definition}

Our first result of this section shows the relationship between $n$-tuples and a fairness criterion that is at least as strong as the symEF1 condition.
We say that an agent's $n$-tuples are \emph{separated by an allocation} if every item in each $n$-tuple is in a different bundle.

\begin{definition}
An agent $i$'s indexed $n$-tuples are \emph{separated} by allocation $(\bundle{1},\ldots,\bundle{n})$ if, for any items $j_{1}, j_{2} \in [m]$, $j_{1} \ne j_{2}$, belonging to the same $n$-tuple $\nTupleElementForAgent{i}{t} \supseteq \{j_{1},j_{2}\}$ for some $t \in [\ceil{m/n}]$, it holds that if $j_{1} \in \bundle{k}$, $k \in [n]$, then $j_{2} \notin \bundle{k}$.
\end{definition}

An agent $i$'s indexed $n$-tuples are separated by the agent's round-robin allocation $\alloc^{i}$ (see \cref{def:agent-round-robin-alloc}).
However, this allocation may not separate other agents' indexed $n$-tuples.

For convenience in the subsequent results, we define notation for the smallest value of an item in a bundle, as a counterpart to the favorite item value $\maxItemValAgentBundle{i}{\bundle{k}}$:
    \begin{equation*}
        \minItemValAgentBundle{i}{\bundle{k}} \defeq \min\{\itemvalue{i}{j} \suchthat j \in \bundle{k}\}.
    \end{equation*}
    
\begin{lemma}\label{lem:separate-tuple-ndm}
    If an allocation $(\bundle{1}, \ldots, \bundle{n})$ separates $\nTuplesForAgent{i}$ for all $i \in [n]$ and $n$ divides $m$,
    then, for all agents $i \in [n]$ and bundles $k,\ell \in [n]$, %
        \begin{equation*}
            \bundlevalue{i}{\bundle{k}} - \minItemValAgentBundle{i}{[m]} \ge \bundlevalue{i}{\bundle{\ell}} - \maxItemValAgentBundle{i}{\bundle{\ell}}.
        \end{equation*}
\end{lemma}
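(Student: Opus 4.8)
The plan is to fix an arbitrary agent $i$ and track the value that each bundle draws from each of agent $i$'s $n$-tuples. Because $n$ divides $m$, every tuple $\nTupleElementForAgent{i}{t}$ contains exactly $n$ items, so separation forces each of the $n$ bundles to receive exactly one item from each tuple. I would let $b_{k,t}$ denote the value, to agent $i$, of the unique item that bundle $\bundle{k}$ draws from $\nTupleElementForAgent{i}{t}$, so that $\bundlevalue{i}{\bundle{k}} = \sum_{t=1}^{m/n} b_{k,t}$ and, for each fixed $t$, the numbers $\{b_{k,t} : k \in [n]\}$ are exactly the values appearing in tuple $t$.

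The key structural fact I would establish first is that for every fixed bundle the contributions are nonincreasing across tuples, i.e.\ $b_{k,t} \ge b_{k,t+1}$. This holds because the items are globally ordered by $\sigma^{i}$, so the least valuable item of tuple $t$ is worth at least the most valuable item of tuple $t+1$; hence any item $\bundle{k}$ draws from tuple $t$ dominates any item it draws from tuple $t+1$. Two consequences follow: the favorite item of $\bundle{\ell}$ is its tuple-$1$ item, so $\maxItemValAgentBundle{i}{\bundle{\ell}} = b_{\ell,1}$; and, trivially since it is the value of some item of $\bundle{k}$, the global minimum satisfies $\minItemValAgentBundle{i}{[m]} \le b_{k,m/n}$.

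Substituting these two facts into the claimed inequality $\bundlevalue{i}{\bundle{k}} - \minItemValAgentBundle{i}{[m]} \ge \bundlevalue{i}{\bundle{\ell}} - \maxItemValAgentBundle{i}{\bundle{\ell}}$ and cancelling reduces it to the shifted comparison
\[
    \sum_{t=2}^{m/n} b_{\ell,t} \le \sum_{t=1}^{m/n - 1} b_{k,t}.
\]
I would prove this term by term, matching tuple $t+1$ of $\bundle{\ell}$ against tuple $t$ of $\bundle{k}$: for each $t$, the value $b_{\ell,t+1}$ is at most the largest value in tuple $t+1$, which by the global ordering is at most the smallest value in tuple $t$, which is at most $b_{k,t}$. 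Summing over $t \in [m/n - 1]$ yields the bound.

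The crux of the argument—and the one step worth highlighting—is recognizing this ``shift by one tuple'' pairing, comparing $\bundle{\ell}$'s contribution from tuple $t+1$ against $\bundle{k}$'s contribution from tuple $t$. It is precisely the round-robin intuition behind \cref{lem:agent-round-robin-alloc-EF1}, where an agent's $r$-th pick dominates any later agent's $(r+1)$-th pick; the point here is that separation guarantees the analogous one-item-per-tuple structure for \emph{every} bundle at once, so the same envy-bounding telescoping applies no matter how bundles are assigned. Everything else is bookkeeping resting on the monotonicity $b_{k,t} \ge b_{k,t+1}$ and the two $\min$/$\max$ substitutions.
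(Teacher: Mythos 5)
Your proof is correct and rests on the same core idea as the paper's: because $n$ divides $m$ and the allocation separates agent $i$'s tuples, every bundle draws exactly one item from each $n$-tuple, and the inequality follows from the shift-by-one-tuple comparison (the item a bundle takes from $\nTupleElementForAgent{i}{t+1}$ is dominated by the item any bundle takes from $\nTupleElementForAgent{i}{t}$). The only difference is bookkeeping: the paper sandwiches the arbitrary bundles between the extremal round-robin bundles $\bundle{1}^{i}$ and $\bundle{n}^{i}$, performs the shift comparison once between those two, and then swaps the subtracted maximum from $\maxItemValAgentBundle{i}{\bundle{1}^{i}}$ to that of the envied bundle, whereas you run the term-by-term matching directly between the two arbitrary bundles $\bundle{k}$ and $\bundle{\ell}$ --- a slightly cleaner route, since it identifies $\maxItemValAgentBundle{i}{\bundle{\ell}}$ as the tuple-$1$ contribution $b_{\ell,1}$ up front and never needs that final substitution of maxima.
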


\begin{proof}
    Let $\alloc^{i} = (\bundle{1}^{i}, \ldots, \bundle{n}^{i})$ be agent $i$'s round-robin allocation from \cref{def:agent-round-robin-alloc}.
    Bundle $\bundle{1}^{i}$ contains agent $i$'s highest-valued item from each $n$-tuple.
    Bundle $\bundle{n}^{i}$ consists of agent $i$'s least-valued item from each $n$-tuple.
    Bundle $\bundle{n}^i$, by construction, is agent $i$'s least-valued bundle in $\alloc^{i}$.
    Nonetheless, agent $i$ is EF1-satisfied when receiving bundle $\bundle{n}^{i}$, by \cref{lem:agent-round-robin-alloc-EF1}.
    
    We slightly strengthen the EF1 guarantee.
    We show that agent $i$'s value for the set of items $\bundle{n}^{i} \setminus \sigma^{i}(m)$
    is higher than agent $i$'s value for $\bundle{1}^{i} \setminus \sigma^{i}(1)$,
    which contains all the highest-valued items from each $n$-tuple aside from agent $i$'s favorite item (belonging to tuple $\nTupleElementForAgent{i}{1}$).
    Observe that $\bundle{n}^{i}$ contains an item from every tuple,
    and any item from a tuple $\nTupleElementForAgent{i}{t}$
    is more valuable to agent $i$ than any item in tuple $\nTupleElementForAgent{i}{t+1}$, by construction of the tuples.
    Hence,
        \begin{equation*}
            \bundlevalue{i}{\bundle{n}^{i}} - \bundlevalue{i}{\sigma^{i}(m)}
            = \bundlevalue{i}{\bundle{n}^{i}} - \minItemValAgentBundle{i}{\bundle{n}^{i}}
            = \sum_{t=1}^{\mathclap{m/n-1}} \bundlevalue{i}{\sigma^{i}(tn)}
            \ge \sum_{t=2}^{m/n} \bundlevalue{i}{\sigma^{i}(tn+1)}
            = \bundlevalue{i}{\bundle{1}^{i}} - \maxItemValAgentBundle{i}{\bundle{1}^{i}}.
        \end{equation*}

    Now consider an arbitrary allocation $(\bundle{1}, \ldots, \bundle{n})$ that separates $\nTuplesForAgent{k}$ for all $k \in [n]$.
    As $n$ divides $m$ by assumption, every bundle $\bundle{\ell}$, $\ell \in [n]$, has exactly one item from each $n$-tuple for agent $i$.
    Furthermore, it holds that $\bundlevalue{i}{\bundle{1}^{i}} \ge \bundlevalue{i}{\bundle{\ell}} \ge \bundlevalue{i}{\bundle{n}^{i}}$ for all $\ell \in [n]$,
    by construction.
    It follows that, if agent $i$ receives bundle $\bundle{\ell}$, then, compared to any other bundle $\bundle{k}$,
    \begin{equation*}
        \begin{split}
        \bundlevalue{i}{\bundle{\ell}} - \minItemValAgentBundle{i}{[m]}
            &= \bundlevalue{i}{\bundle{\ell}} - \minItemValAgentBundle{i}{\bundle{n}^{i}}
            \\
            &\ge \bundlevalue{i}{\bundle{n}^{i}} - \minItemValAgentBundle{i}{\bundle{n}^{i}}
            \\
            &\ge \bundlevalue{i}{\bundle{1}^{i}} - \maxItemValAgentBundle{i}{\bundle{1}^{i}} %
            \\
            &\ge \bundlevalue{i}{\bundle{k}} - \maxItemValAgentBundle{i}{\bundle{1}^{i}}
            \\
            &\ge \bundlevalue{i}{\bundle{k}} - \maxItemValAgentBundle{i}{\bundle{1}^{i}} + \left( \maxItemValAgentBundle{i}{\bundle{1}^{i}} - \maxItemValAgentBundle{i}{\bundle{k}} \right)
            \\
            &= \bundlevalue{i}{\bundle{k}} - \maxItemValAgentBundle{i}{A_{k}}. \qedhere
        \end{split}
    \end{equation*}
\end{proof}

This result is stronger than symEF1 if an agent positively values all items; otherwise, if the least-valued item is worth zero, then the inequality reduces to the symEF1 guarantee.

Next, we show that if $n$ does not divide $m$, then we can add items of zero value and reuse \cref{lem:separate-tuple-ndm} to obtain a symEF1 allocation.

\begin{corollary}\label{cor:sep-n-tup}
    If an allocation $\alloc$ separates $\nTuplesForAgent{i}$ for all $i \in [n]$, then $\alloc$ is a symEF1 allocation.
\end{corollary}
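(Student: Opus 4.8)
The plan is to reduce \cref{cor:sep-n-tup} to the divisible case already handled by \cref{lem:separate-tuple-ndm} through a padding argument. Given an allocation $\alloc$ that separates $\nTuplesForAgent{i}$ for all $i \in [n]$, I would introduce $r \defeq n\ceil{m/n} - m$ dummy items, each worth $0$ to every agent, obtaining a padded instance with $m' \defeq m + r = n\ceil{m/n}$ items so that $n$ divides $m'$. Because the dummies are least-valued for all agents, each ranking $\sigma^i$ is extended simply by appending the dummies; hence for every agent $i$ the first $\ceil{m/n}-1$ padded $n$-tuples are exactly the original full tuples, while the last padded tuple consists of agent $i$'s $s \defeq m-(\ceil{m/n}-1)n$ least-valued real items together with all $r = n-s$ dummies.

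The crux is extending $\alloc$ to the padded instance so that it separates every agent's $n$-tuples at once. I would first observe that the bundle sizes $\lvert\bundle{k}\rvert$ are independent of the agent whose tuples we use: since $\alloc$ separates $\nTuplesForAgent{i}$ and each of the $\ceil{m/n}-1$ full tuples contains $n$ items, every bundle receives exactly one item from each full tuple, so $\lvert\bundle{k}\rvert \in \{\ceil{m/n}-1,\,\ceil{m/n}\}$ and $\bundle{k}$ contains an item of agent $i$'s last tuple precisely when $\lvert\bundle{k}\rvert = \ceil{m/n}$. As the right-hand condition does not mention $i$, the same $s$ ``large'' bundles carry a last-tuple item for every agent, while the remaining $n-s$ ``short'' bundles contain no agent's last-tuple item. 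I would therefore place the $r = n-s$ dummies one per short bundle. This equalizes all bundle sizes to $\ceil{m/n}$, and for every agent $i$ the padded last tuple is separated, since its $s$ real items lie in the (distinct) large bundles and its $r$ dummies lie in the (distinct) short bundles. Thus the padded allocation separates all padded $\nTuplesForAgent{i}$ and satisfies the divisibility hypothesis of \cref{lem:separate-tuple-ndm}.

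Finally, I would apply \cref{lem:separate-tuple-ndm} to the padded instance and translate the conclusion back. Since every dummy is worth $0$, the padded bundle values equal the original ones, $\minItemValAgentBundle{i}{[m']} = 0$, and $\maxItemValAgentBundle{i}{\bundle{\ell}}$ is still attained on a real item; the strengthened inequality $\bundlevalue{i}{\bundle{k}} - \minItemValAgentBundle{i}{[m']} \ge \bundlevalue{i}{\bundle{\ell}} - \maxItemValAgentBundle{i}{\bundle{\ell}}$ then collapses to $\bundlevalue{i}{\bundle{k}} \ge \bundlevalue{i}{\bundle{\ell}} - \maxItemValAgentBundle{i}{\bundle{\ell}}$ for all $i,k,\ell \in [n]$, which is exactly the symEF1 condition for $\alloc$. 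When $n$ already divides $m$ we have $r=0$ and the statement is immediate, so the padding argument covers both cases uniformly.

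I expect the middle step to be the main obstacle, namely verifying that a single placement of dummy items separates all agents' tuples simultaneously. The decisive ingredient there is the agent-independence of the bundle sizes $\lvert\bundle{k}\rvert$, which forces the ``short'' bundles to coincide across agents and makes the one-dummy-per-short-bundle assignment valid for every $\nTuplesForAgent{i}$ at the same time.
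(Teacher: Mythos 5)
Your proof is correct and follows essentially the same route as the paper: pad the instance with zero-valued dummy items so that $n$ divides the item count, check that the padded allocation still separates every agent's $n$-tuples, apply \cref{lem:separate-tuple-ndm}, and note that the inequality collapses to symEF1 because the minimum item value is zero. Your middle step (agent-independence of the bundle sizes, hence a consistent one-dummy-per-short-bundle placement) carefully justifies a claim the paper's proof only asserts, and your count of $r = n\ceil{m/n} - m$ dummies is the correct one where the paper writes $r = m \bmod n$.
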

\begin{proof}
    Assume that $\alloc = (\bundle{1}, \ldots, \bundle{n})$ is an allocation that separates $\nTuplesForAgent{i}$ for all $i \in [n]$.
    If $n$ divides $m$, then the claim follows from the stronger condition guaranteed by \cref{lem:separate-tuple-ndm}.
    Otherwise, if $m$ is not divisible by $n$, then consider a modified instance with an additional $r = m \bmod n$ items that have zero value for all agents.
    Let $\tilde{\bundle{k}}$ denote a bundle with the same items as $\bundle{k}$, but with one of the new zero-valued items included whenever $\card{\bundle{k}} < \ceil{m/n}$.
    The allocation $\alloc = (\tilde{\bundle{1}},\ldots,\tilde{\bundle{n}})$ separates the $n$-tuples for all agents in the modified instance.
    By \cref{lem:separate-tuple-ndm}, these augmented bundles constitute an allocation for the modified instance in which, for every agent $i$ and bundle $k$,
    the following inequality holds for any bundle $\ell$:
        \begin{equation*}
            \bundlevalue{i}{\tilde{\bundle{k}}} - \minItemValAgentBundle{i}{[m+r]} \ge \bundlevalue{i}{\tilde{\bundle{\ell}}} - \maxItemValAgentBundle{i}{\tilde{\bundle{\ell}}}.
        \end{equation*}
    As 
        $\bundlevalue{i}{\tilde{\bundle{k}}} = \bundlevalue{i}{{\bundle{k}}}$,
        $\maxItemValAgentBundle{i}{\tilde{\bundle{k}}} = \maxItemValAgentBundle{i}{{\bundle{k}}}$,
        $\minItemValAgentBundle{i}{[m+r]} = 0$
    for all $i,k \in [n]$,
    the claim follows.
\end{proof}

An allocation separating the $n$-tuples for all agents is not only symEF1, but also achieves this while keeping all bundles approximately equal size (differing in cardinality by at most one).
This property is called \emph{balanced}~\citepar{kyropoulou2020almost}.
It is neither the case that all balanced allocations are symEF1, nor that all symEF1 allocations must be balanced.

\subsection{Coloring a Graph to Separate Item Tuples}
\label{sec:coloring-graph}
Let $\nTuples = (\nTuplesForAgent{1},\ldots,\nTuplesForAgent{n})$ denote a set of indexed $n$-tuples for a given set of valuations.
As a function of the $n$-tuples, we define $G(\nTuples)$, called the \emph{item graph}, as the graph with vertex set $[m]$ (one vertex per item) 
and edges between every pair of vertices
whose corresponding items
belong to the same $n$-tuple for at least one agent.
Concretely, the edge set is
    \begin{equation*}
        \left\{
                \{j_{\ell}, j_{k}\} \subseteq [m] \times [m] \suchthat j_{\ell} \ne j_{k} \text{ and } \{j_{\ell}, j_{k}\} \subseteq \nTuplesForAgent{i} \text{ for some agent $i \in [n]$ }
            \right\}.
    \end{equation*}
The edges encode ``conflicts'':
if two items in an $n$-tuple for an agent are allocated to the same bundle,
then the allocation will not separate the $n$-tuples for that agent.

A \emph{$k$-coloring} is a partition of a graph's vertices into $k$ classes that induce independent sets, i.e., two vertices receiving the same color cannot be adjacent.
In \cref{thm:symEF1},
we prove that if the item graph is $n$-colorable, then the resulting partition of the items is symEF1.

Before stating and proving this result,
we offer a small example
to build intuition about both the item graph and the $n$-tuples that define the edge set.
\cref{ex:not-necessary} demonstrates that the condition of separating $\nTuplesForAgent{i}$ for all $i \in [n]$, which is represented by the coloring of our item graph, is not necessary for the existence of a symEF1 allocation.

\begin{example}
\label{ex:not-necessary}
	\cref{fig:stuff-not-nece-graph} shows the valuations of three agents for six items.
    The boxed items belong to $\nTupleElementForAgent{i}{2}$ for each agent $i \in [3]$.
    The corresponding item graph is adjacent; the dashed edges represent items that appear in $\nTupleElementForAgent{i}{2}$ for some $i \in [3]$.
    Since the graph contains a 5-clique, it is not 3- (or even 4-) colorable.
    However, the allocation $(\{a,f\}, \{c,e\}, \{b,d\})$ is symEF1.
\end{example}

\begin{figure}
 \centering
    \begin{minipage}{0.65\textwidth}
        \centering
        \renewcommand{\tabcolsep}{2.5pt}
        \begin{tabular}{lcccccc}
            \toprule
                    & Item $a$ & Item $b$ & Item $c$ & Item $d$ & Item $e$ & Item $f$\\
            \midrule
            Agent~\player{1} & \toOne{1} & \toOne{2} & \toOne{3} & {4} & {5} & {6}\\
            Agent~\player{2} & \toOne{1} & \toOne{2} & {4} & \toOne{3} & {5} & {6} \\
            Agent~\player{3} & \toOne{1} & \toOne{2} & {4} & {5} &\toOne{3} & {6} \\
            \bottomrule
        \end{tabular}
    \end{minipage}%
    \begin{minipage}{0.35\textwidth}
        \centering
         \begin{tikzpicture}[node distance={13mm}, thick, main/.style = {draw, circle, minimum height=5mm, inner sep=0pt, opacity=1.,scale=1}]
            \def\radius{1.25}
        
            \coordinate (a) at ({\radius*cos(234)}, {\radius*sin(234)});
            \coordinate (b) at ({\radius*cos(162)}, {\radius*sin(162)});
            \coordinate (c) at ({\radius*cos(90)}, {\radius*sin(90)});
            \coordinate (d) at ({\radius*cos(18)}, {\radius*sin(18)});
            \coordinate (e) at ({\radius*cos(306)}, {\radius*sin(306)});
            \coordinate (f) at ($(d)+(\radius,0)$);

            \node[main, fill=mywhite] (itema) at (a) {$a$};
            \node[main, fill=black, text=white] (itemb) at (b) {$b$};
            \node[main, fill=mylightblue, pattern=vertical lines, pattern color=mylightblue] (itemc) at (c) {$c$};
            \node[main, fill=myorange] (itemd) at (d) {$d$};
            \node[main, fill=mygreen1, pattern=crosshatch, pattern color=mygreen1] (iteme) at (e) {$e$};
            \node[main, fill=mywhite] (itemf) at (f) {$f$};

            \draw [color=purple, dashed, thick] (itemb)--(itemd);
            \draw [color=purple, dashed, thick] (itema)--(itemb);
            \draw [color=purple, dashed, thick] (itema)--(itemc);
            \draw [color=purple, dashed, thick] (itemb)--(itemc);
            \draw [color=purple, dashed, thick] (itema)--(itemd);
            \draw [color=purple, dashed, thick] (itemb)--(itemd);
            \draw [color=purple, dashed, thick] (itema)--(iteme);
            \draw [color=purple, dashed, thick] (itemb)--(iteme);
            \draw (itemd)--(iteme);
            \draw (itemd)--(itemf);
            \draw (iteme)--(itemf);
            \draw (itemc)--(iteme);
            \draw (itemc)--(itemf);
            \draw (itemc)--(itemd);
        \end{tikzpicture}
    \end{minipage}
	\caption{An instance with three agents and six items in which the item graph is not $3$-colorable, though a symEF1 allocation exists, showing that the sufficient condition in \cref{thm:symEF1} is not necessary. The boxed items represent indexed $n$-tuple $\nTupleElementForAgent{i}{2}$ for each agent $i \in [3]$ and lead to the dashed edges in the item graph. A 5-coloring of the graph is given.}
     \label{fig:stuff-not-nece-graph}
 \end{figure}
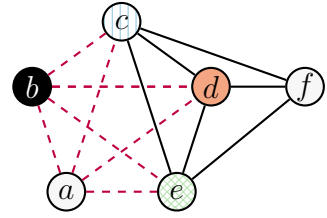

We now state \cref{thm:symEF1}, which proves that $n$-colorability of the item graph is a sufficient condition for existence of a symEF1 allocation.

\begin{theorem}
\label{thm:symEF1}
    If the item graph $G(\nTuples)$ is $n$-colorable, then there exists a symEF1 allocation.
\end{theorem}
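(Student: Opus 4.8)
The plan is to show that any proper $n$-coloring of the item graph $G(\nTuples)$ induces an allocation that separates every agent's indexed $n$-tuples, at which point \cref{cor:sep-n-tup} immediately delivers the symEF1 guarantee. Concretely, I would take a proper $n$-coloring $c \colon [m] \to [n]$ (which exists by hypothesis) and define the allocation $\alloc = (\bundle{1}, \ldots, \bundle{n})$ by placing each item $j$ into bundle $\bundle{c(j)}$, so that each bundle is exactly one color class. Since every item receives exactly one color, this is a valid partition into $n$ bundles.

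The crux is the observation that the edge set of $G(\nTuples)$ makes each indexed $n$-tuple a clique: by the definition of the edges, any two distinct items $j_1, j_2$ that lie in a common tuple $\nTupleElementForAgent{i}{t}$ for some agent $i$ are adjacent. Because the coloring is proper, adjacent items receive distinct colors and therefore land in distinct bundles. This is precisely the separation condition, so $\alloc$ separates $\nTuplesForAgent{i}$ for every $i \in [n]$, and applying \cref{cor:sep-n-tup} finishes the argument.

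I expect the theorem to be a short bookkeeping step rather than a genuine obstacle, since the substantive content lives in \cref{lem:separate-tuple-ndm} and \cref{cor:sep-n-tup}; the only care needed is to match the graph-coloring formalism to the separation definition and confirm the edge cases are benign. In particular, I would note that when $m \ge n$ the first tuple $\nTupleElementForAgent{i}{1}$ is an $n$-clique, forcing every proper $n$-coloring to use all $n$ colors so that no bundle is spuriously empty, while any empty or singleton bundles that can arise only when $m < n$ are harmless and are already absorbed by the zero-padding reduction inside \cref{cor:sep-n-tup}. The conceptual payoff—and the reason the coloring viewpoint is the right abstraction—is that separating the tuples is exactly a constraint-satisfaction problem whose conflicts are captured by the within-tuple cliques, so existence of a symEF1 allocation reduces cleanly to $n$-colorability of $G(\nTuples)$.
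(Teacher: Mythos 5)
Your proposal is correct and follows essentially the same route as the paper's proof: define the bundles as the color classes of a proper $n$-coloring, observe that the edges of $G(\nTuples)$ force items sharing an indexed $n$-tuple into different bundles, and invoke \cref{cor:sep-n-tup}. The additional remarks about empty bundles and the clique structure of $\nTupleElementForAgent{i}{1}$ are fine but not needed beyond what the paper already records.
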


\begin{proof}
    Assume that the item graph $G(\nTuples)$ is $n$-colorable.
    For each $\ell \in [n]$, let $\bundle{\ell}$ consist of the set of items assigned color $\ell$.
    For every pair of items $j_1, j_2 \in [m]$ that belong to the same $n$-tuple for any agent,
    there is an edge between the corresponding vertices of $G(\nTuples)$.
    By definition of coloring, these items must receive different colors, and hence belong to different bundles.
    As a result, the partition $(\bundle{1},\ldots,\bundle{n})$ separates the indexed $n$-tuples for all agents,
    and by \cref{cor:sep-n-tup} is symEF1.
\end{proof}

This result allows us to prove that for two agents there always exists a symEF1 allocation.
\begin{corollary}\label{cor:n2-exists}
    For two agents, a symEF1 allocation exists.
\end{corollary}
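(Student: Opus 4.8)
The plan is to invoke \cref{thm:symEF1}: since that theorem guarantees a symEF1 allocation whenever the item graph $G(\nTuples)$ is $n$-colorable, for $n=2$ it suffices to prove that $G(\nTuples)$ is $2$-colorable, i.e., bipartite. So the entire corollary reduces to a structural claim about the item graph when there are two agents.

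First I would unpack the structure of $\nTuplesForAgent{i}$ for $n=2$. Each agent's indexed $2$-tuples partition $[m]$ into consecutive pairs in the agent's preference order, $\{\sigma^i(1),\sigma^i(2)\}, \{\sigma^i(3),\sigma^i(4)\}, \ldots$, with a final singleton tuple if $m$ is odd. Since each tuple contains at most two items, the ``every pair of items in a common tuple'' rule in the edge-set definition contributes exactly one edge per size-two tuple of agent $i$ and none for a singleton. Hence agent $i$'s tuples induce a \emph{matching} $M_i$ on the vertex set $[m]$ (each vertex lies in at most one edge of $M_i$), and the item graph is the union $G(\nTuples) = M_1 \cup M_2$ of two matchings.

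Next I would argue that a union of two matchings is bipartite. Every vertex has degree at most two (at most one incident edge from each $M_i$), so each connected component of $G(\nTuples)$ is a simple path or a cycle. In any cycle, two consecutive edges sharing a vertex cannot both come from the same matching, so the edges must alternate between $M_1$ and $M_2$; consequently every cycle has even length. A graph with no odd cycle is bipartite, hence $2$-colorable. Applying \cref{thm:symEF1} (which, via \cref{cor:sep-n-tup}, turns the two color classes into bundles separating every agent's tuples) then yields a symEF1 allocation.

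I expect the only point requiring care---rather than a genuine obstacle---is the alternation argument establishing that all cycles are even, together with the bookkeeping for odd $m$ (where the trailing singleton contributes no edge and therefore cannot disturb bipartiteness). Everything else is immediate once the item graph is recognized as a union of two matchings; there is no need to exhibit the coloring explicitly, as bipartiteness alone suffices to invoke \cref{thm:symEF1}.
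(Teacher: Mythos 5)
Your proposal is correct and follows essentially the same route as the paper: both reduce the corollary to $2$-colorability of the item graph via \cref{thm:symEF1}, observe that each agent's $2$-tuples contribute a matching so every vertex has degree at most two, and conclude that all cycles are even and hence the graph is bipartite. Your alternation argument (consecutive edges of a cycle cannot come from the same matching) is a cleaner packaging of the step the paper carries out by explicitly growing each component two vertices at a time, but the underlying decomposition and key structural fact are identical.
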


\begin{proof}
    We will prove that every item graph for a two-agent instance has no odd cycles, so it is bipartite and thus 2-colorable.
    Assume without loss of generality that $m$ is even (e.g., by adding a zero-valued item).
    Let $G[C]$ denote the subgraph of the item graph $G(\nTuples)$ \emph{induced} by a vertex subset $C$, defined by vertex set $C$ and any edges of $G(\nTuples)$ between vertices in $C$.

    Every vertex of $G(\nTuples)$ has degree at most two, since each of the agents contributes an edge incident to that vertex.
    The degree of a vertex is one when it belongs to a $2$-tuple that appears for both agents.
    As the adjacent vertex is in the same $2$-tuple, it also has degree one, creating an isolated edge in the item graph,
    which is an even cycle of length 2.
    The vertices of the item graph that are not in isolated edges all have degree 2;
    hence, the subgraph induced by these vertices is a set of cycles.
    We show that these cycles are all even.

    Let $\{u_0,v_0\}$ be an arbitrary $2$-tuple for agent 1,
    and denote by $C$ the vertices of the connected component of $G(\nTuples)$ containing $u_0$ and $v_0$.
    We proceed by iteratively identifying more vertices in $C$, starting with $C_0 \defeq \{u_0,v_0\}$.
    At step $k$, we will be given a set of $2k$ vertices in $C$, denoted by $C_{k-1}$.
    We will either determine that $C = C_{k-1}$ induces an even cycle $G[C]$,
    or we find two additional vertices of $C$ that are adjacent to those in $C_{k-1}$.
    
    For step 1, there are two possibilities.
    If $\{u_0,v_0\}$ is also a $2$-tuple for agent 2, then $C = C_0 = \{u_0,v_0\}$ is an isolated edge in the item graph, i.e., an even cycle of length 2.
    If not, then $u_0$ and $v_0$ belong to different $2$-tuples for agent 2,
    say $\{u_0,u_1\}$ and $\{v_0,v_1\}$.
    Let $C_1 \defeq C_0 \cup \{u_1,v_1\} = \{u_1,u_0,v_0,v_1\}$.

    Continuing inductively, at the start of step $k$, the set $C_{k-1} = \{u_{k-1},\ldots,u_0,v_0,\ldots,v_{k-1}\}$ has (even) cardinality $2k$.
    We analyze whether we can grow $C_{k-1}$ using the $2$-tuples of agent $i$, where $i=1$ when $k$ is even and $i=2$ when $k$ is odd.
    If $\{u_{k-1},v_{k-1}\}$ is a $2$-tuple for agent $i$, then $C = C_{k-1}$ induces a $2k$-cycle in the item graph.
    Otherwise, $u_{k-1}$ and $v_{k-1}$ belong to different $2$-tuples for agent $i$, which implies that two additional vertices, say $u_{k}$ and $v_{k}$,
    belong to the connected component $C$, constituting $C_{k} \defeq C_{k-1} \cup \{u_{k}, v_{k}\}$.

    Since this process maintains an even cardinality subset of $C$ at each step and eventually terminates due to the finiteness of the vertex set, the cycle $G[C]$ is even.
    A graph with no odd cycles is 2-colorable, which implies that a symEF1 allocation exists by \cref{thm:symEF1}.
\end{proof}

An alternate proof of \cref{cor:n2-exists} can be derived from results on matroids by \citeaut{edmonds1970submodular,edmonds2003submodular}, or equivalently through the bihierarchy framework of \citeaut{BudChe13_BiHi}, by showing that an allocation that can separate $\nTuplesForAgent{1}$ and $\nTuplesForAgent{2}$ is obtainable via an integer program with constraints defined by two laminar families (one for each of the agents).
Specifically, the resulting constraint matrix can be proved to be totally unimodular, implying the existence of an integer solution corresponding to the desired allocation,
and hence a symEF1 allocation exists for two agents via \cref{cor:sep-n-tup}.
Notably, total unimodularity is not guaranteed for more agents; \cref{thm:symEF1} can be viewed as a generalization of this technique.

While \cref{thm:symEF1} provides a sufficient condition for the existence of a symEF1 allocation,
we know from \cref{ex:not-guaranteed} that, for three or more agents, symEF1 allocations do not always exist.
However, as our last positive result for this line of inquiry, \cref{prop:n-tuples} implies that when the sufficient condition is met for one instance,
then existence is assured for a whole family of related instances that maintain the sets of indexed $n$-tuples for all agents.

\begin{proposition}\label{prop:n-tuples}
    Let $G(\nTuples)$ be an $n$-colorable item graph associated to a particular instance of $n$ agents' valuations for $m$ items.
    Let $ \overline{\nTuples} = \{ \overline{\nTuplesForAgent{1}}, \ldots, \overline{\nTuplesForAgent{n}} \} $ be the set of indexed $n$-tuples for a different instance with $n$ agents and $m$ items.
    If there exists a permutation $\sigma$ of the agents and a permutation $\pi^i$ of the indexed $n$-tuples for each agent $i \in [n]$ such that $\pi^i(\overline{\nTuplesForAgent{\sigma(i)}}) = {\nTuplesForAgent{i}}$,
    then $G(\overline{\nTuples}) = G(\nTuples)$.
\end{proposition}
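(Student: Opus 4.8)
The plan is to show that the edge set of the item graph is a function solely of the \emph{unordered family} of tuple-sets pooled over all agents, so that it is invariant under both the relabeling $\sigma$ of agents and the reindexing $\pi^i$ of each agent's tuples. Concretely, from the definition of $G(\nTuples)$, a pair $\{j_1, j_2\} \subseteq [m]$ with $j_1 \ne j_2$ is an edge if and only if $\{j_1, j_2\} \subseteq \nTupleElementForAgent{i}{t}$ for some agent $i \in [n]$ and some index $t \in [\ceil{m/n}]$. The index $t$ and the agent label $i$ play no further role: only the membership of the two items in a common tuple-set matters. So the first step is to record that the edge set equals $\{\{j_1,j_2\} : \{j_1,j_2\} \subseteq S \text{ for some } S \in \mathcal{F}\}$, where $\mathcal{F} \defeq \{\nTupleElementForAgent{i}{t} : i \in [n],\, t \in [\ceil{m/n}]\}$ is the family of all tuple-sets appearing in the first instance (and likewise $\overline{\mathcal{F}}$ for the second).

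Next I would translate the hypothesis into an equality of these families. The condition $\pi^i(\overline{\nTuplesForAgent{\sigma(i)}}) = \nTuplesForAgent{i}$ says that, after the reindexing $\pi^i$, agent $i$'s sequence of tuples in the first instance coincides entry-by-entry with agent $\sigma(i)$'s tuples in the second instance; since $\pi^i$ is a bijection on the tuple indices, this yields the set equality $\{\nTupleElementForAgent{i}{t}\}_{t} = \{\overline{\nTupleElementForAgent{\sigma(i)}{t'}}\}_{t'}$ as collections of subsets of $[m]$, for each $i \in [n]$. Taking the union over all agents and using that $\sigma$ is a bijection on $[n]$ (so $\{\sigma(i) : i \in [n]\} = [n]$), I obtain $\mathcal{F} = \overline{\mathcal{F}}$.

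Finally, combining the two steps, the edge sets of $G(\nTuples)$ and $G(\overline{\nTuples})$ are identical, as each is precisely the set of pairs contained in a common member of the same family $\mathcal{F} = \overline{\mathcal{F}}$; the vertex sets agree trivially, as both equal $[m]$. Hence $G(\overline{\nTuples}) = G(\nTuples)$.

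I do not expect a serious obstacle here: the content is entirely in the first observation that tuple indices and agent labels are invisible to the edge-generating relation. The only point demanding care is the bookkeeping when passing from the permuted \emph{sequences} of tuples to the \emph{sets} of tuple-sets---one must check that $\pi^i$ being a permutation of indices, rather than of items, is exactly what guarantees the per-agent set equality, and that the agent permutation $\sigma$ merely permutes which per-agent family is matched to which, leaving the pooled union $\mathcal{F}$ unchanged.
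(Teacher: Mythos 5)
Your proposal is correct and matches the paper's approach: the paper simply declares the claim immediate because the vertices and edges coincide, and your argument is exactly the careful spelling-out of why the pooled family of tuple-sets (and hence the edge set) is invariant under the agent permutation $\sigma$ and the per-agent reindexings $\pi^i$.
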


\begin{proof}
    The claim is immediate as the vertices and edges are the same for $G(\overline{\nTuples})$ and $G(\nTuples)$.
\end{proof}

In particular, the above result means a symEF1 allocation obtained via the sufficient condition from \cref{thm:symEF1} remains symEF1 if one or more agents completely reverse their valuations or arbitrarily reorder their preferences within every $n$-tuple.

\subsection{Lower Bounding the Number of SymEF1 Allocations}
\label{sec:lower-bound-alloc}

When a symEF1 allocation exists for an instance, the next natural question is whether it is unique.
Our graph construction and \cref{thm:symEF1} leads to a simple, constructive answer.
We show how to find \emph{multiple} symEF1 allocations and indeed \emph{exponentially many} in the number of components of the item graph.

We say that two allocations of the item set $[m]$ are \emph{distinct} if the partitions they imply are not identical, i.e., there is some bundle in one allocation that is not identical to a bundle in the second allocation.
To avoid ``trivial'' cases in which distinct allocations occur because items are interchangeable,
we assume that items are \emph{distinct}:
each item has positive value for at least one agent,
and no two items are identical.

\begin{theorem}\label{thm:lowerbound-symef1}
    Given a set of indexed $n$-tuples $\nTuples$ for an instance with distinct items,
    let $C$ denote the number of components in the item graph $G(\nTuples)$.
    If $G(\nTuples)$ is $n$-colorable,
    then the number of distinct symEF1 allocations is at least $(n!)^{C-1}$.
\end{theorem}
\begin{proof}
    Assume that the item graph $G(\nTuples)$ has $C$ connected components and is $n$-colorable. From \cref{thm:symEF1} we know that an $n$-coloring of the item graph corresponds to a symEF1 allocation. We wish to only count distinct allocations. If we find an initial coloring of the graph and fix the coloring of a single component,
    then any new $n$-coloring on the other $C-1$ components returns a distinct symEF1 allocation.
    
    Since the cardinality of a tuple is $n$ we know that a given component has a minimum of $n$ vertices. Additionally, since we assumed the graph to be $n$-colorable, and there are at least $n$ vertices, we can simply permute these colors to obtain $n!$ distinct symEF1 allocations. Since there are $C-1$ components that are not fixed, there are at least $(n!)^{C-1}$ distinct symEF1 allocations.
\end{proof}

While the construction of this bound is simple,
it is useful for many instances of the problem.
For example, suppose all agents are identical, or more generally, in each agent's round-robin allocation, the same set of items is picked in each round.
Then all the agents have identical indexed $n$-tuples.
As a result, the item graph has $\ceil{m/n}$ components.
Following this reasoning, when the set of indexed $n$-tuples is nearly identical across agents, the resulting item graph may have more connected components, leading to a multitude of distinct symEF1 allocations.
Depending on the resources being allocated, you might expect a high correlation across agent preferences, 
which would mean tuples that are similar and thus graphs with more components.

From \cref{thm:lowerbound-symef1}, we have a lower bound on the number of distinct symEF1 allocations for a problem instance that satisfies \cref{thm:symEF1}. Enumerating the number of distinct $n$-colorings of a graph is difficult. Therefore, in general this lower bound is not tight. In the next section, we study a small instance of the problem to gain better intuition about this lower bound.

\subsection{Two SymEF1 Allocations for Two Agents and Four Items}
\label{sec:2agents-4items}

We have proved that a symEF1 allocation always exists for two agents, but one might expect that the symmetry restriction would make such allocations rare or even unique for some instances.
This is indeed the case for two agents and three items, as we see in \cref{ex:unique-symef1-n2-m3}.

\begin{example}
\label{ex:unique-symef1-n2-m3}
    Consider the valuations of two agents for three items shown below.

\begin{center}
    \begin{tabular}{l *{3}{c}}
        \toprule
                & Item $a$ & Item $b$ & Item $c$ \\
        \midrule
        Agent~\player{1} & {$1$} & {$1/2$\phantom{$\ +\epsilon$}} & {$1/2+\epsilon$} \\
        Agent~\player{2} & {$1$} & {$1/2 + \epsilon$} & {$1/2$\phantom{$\ +\epsilon$}} \\
        \bottomrule
    \end{tabular}
\end{center}
    The bundle containing item $a$ cannot have item $b$ or $c$, as the agent that values that item at $1/2+\epsilon$ would not be EF1-satisfied compared to the other bundle that would have value $1/2$.
    Hence, the only distinct symEF1 partition is $(\{a\},\{b,c\})$.
    \exqed
\end{example}

We next analyze instances with two agents and four items,
for which we prove that, perhaps surprisingly, there always exist at least two distinct symEF1 allocations.

\begin{proposition}\label{prop-no-unique-2}
    Any instance with two agents and four distinct items admits at least two distinct symEF1 allocations.
\end{proposition}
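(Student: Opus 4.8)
The plan is to set up a short structural dichotomy, dispatch all but one case immediately, and reserve a direct construction for the remaining case. Relabel the four items so that agent~1's ranking is $\sigma^{1}(1),\sigma^{1}(2),\sigma^{1}(3),\sigma^{1}(4)$ from most to least valued. For each agent $i$, write $T_{i} \defeq \{\sigma^{i}(1),\sigma^{i}(2)\}$ for its top indexed $2$-tuple and let $X_{i}$ be the balanced partition grouping $T_{i}$ against the bottom pair $\{\sigma^{i}(3),\sigma^{i}(4)\}$; thus $X_{i}$ is just agent $i$'s tuple partition. There are exactly three balanced ($2$-$2$) partitions of four items. My first step is a lemma, proved via the two-agent EF1 characterization (of two bundles, the one an agent values more is EF1-acceptable exactly when deleting its most-valued item removes the surplus): among the three balanced partitions, agent $i$ is EF1-satisfied with the two partitions other than $X_{i}$, and is EF1-satisfied with $X_{i}$ itself if and only if $\bundlevalue{i}{\sigma^{i}(2)} \le \bundlevalue{i}{\sigma^{i}(3)} + \bundlevalue{i}{\sigma^{i}(4)}$. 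Each assertion reduces to a single inequality check, so this step is routine.

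Next I would split on whether the agents share a tuple partition. If $X_{1} = X_{2}$, then the only balanced partition either agent could reject is this common one, so at least two of the three balanced partitions are symEF1 (this also follows from \cref{thm:lowerbound-symef1}, since the item graph then has two components). If $X_{1} \ne X_{2}$, let $X_{3}$ be the remaining balanced partition; by the lemma neither agent rejects $X_{3}$, so $X_{3}$ is symEF1 — this is exactly the allocation produced by the $2$-coloring of the item graph in \cref{thm:symEF1}, and it serves as the first allocation. If in addition agent~1 accepts $X_{1}$ (its inequality holds), then $X_{1}$ is symEF1 as well, because agent~2 never rejects the partition $X_{1} \ne X_{2}$; this yields a second, distinct allocation, and symmetrically if agent~2 accepts $X_{2}$.

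The remaining case is $X_{1} \ne X_{2}$ with both agents rejecting their own tuple partition, i.e. $\bundlevalue{i}{\sigma^{i}(2)} > \bundlevalue{i}{\sigma^{i}(3)} + \bundlevalue{i}{\sigma^{i}(4)}$ for $i \in \{1,2\}$, so that each favorite item strictly dominates that agent's bottom pair. Here I would supply the second allocation as an \emph{unbalanced} partition. Since $X_{1} \ne X_{2}$, the top pairs $T_{1}$ and $T_{2}$ cannot be disjoint — disjoint top pairs would force $T_{2}$ to equal agent~1's bottom pair and hence the two partitions to coincide — so I may choose an item $x \in T_{1} \cap T_{2}$. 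I then claim the singleton partition $(\{x\}, [m] \setminus \{x\})$ is symEF1. Indeed, for each agent $i$, because $x$ lies among $i$'s two most-valued items, agent $i$'s favorite item inside the three-item bundle is the other element of $T_{i}$, and removing it leaves value $\bundlevalue{i}{\sigma^{i}(3)} + \bundlevalue{i}{\sigma^{i}(4)}$, which is strictly below $\bundlevalue{i}{\sigma^{i}(2)} \le \bundlevalue{i}{x}$; hence agent $i$ is EF1-satisfied with both bundles. Being unbalanced, this partition differs from the balanced $X_{3}$, completing the argument.

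The main obstacle is this last case: one must recognize that no balanced partition is guaranteed to furnish the second allocation once both agents reject their tuple partitions, and that the correct object is instead the singleton of an item shared by the two top pairs, whose EF1-acceptability for both agents is precisely what the simultaneous strict-domination inequalities deliver. The earlier cases are bookkeeping by comparison.
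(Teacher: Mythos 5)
Your proof is correct, and it is organized quite differently from the paper's. The paper fixes agent~1's ranking $v_{1a}>v_{1b}>v_{1c}>v_{1d}$, enumerates all seven partitions of four items, determines which are viable for agent~1, and then argues case by case over agent~2's possible orderings (writing down two specific orderings of agent~2 in the hard branch); the singleton $(\{a\},\{b,c,d\})$ does appear there, but only as one of several enumerated candidates whose viability is checked against those orderings. You instead restrict attention to the three balanced partitions, prove a clean lemma (each agent accepts every balanced partition other than its own tuple partition $X_i$, and accepts $X_i$ iff $v_i(\sigma^i(2))\le v_i(\sigma^i(3))+v_i(\sigma^i(4))$), and split on whether $X_1=X_2$ and on which agents accept their own tuple partition. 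This isolates exactly one hard case --- $X_1\ne X_2$ with both agents rejecting their own tuple partition --- and resolves it with a single explicit unbalanced construction $(\{x\},[4]\setminus\{x\})$ for $x\in T_1\cap T_2$, whose EF1 validity for both agents is immediate from the two rejection inequalities. Your decomposition is symmetric in the agents, makes exhaustiveness transparent without enumerating agent~2's orderings, and pinpoints structurally why a balanced second allocation cannot always be found (consistent with the paper's remark that the $4$-cycle item graph has a unique $2$-coloring); the paper's version is more concretely enumerative but its case coverage for agent~2 is harder to audit. All of your individual inequality checks (the two non-$X_i$ balanced partitions, the acceptance criterion for $X_i$, the nonemptiness of $T_1\cap T_2$, and the singleton verification) are sound.
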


We defer the somewhat tedious case-based proof to \Cref{app:proofs}.
We instead show why the item graph construction and vertex coloring approach of \cref{thm:symEF1} and \cref{cor:n2-exists} is insufficient to prove this result, while also offering some insight into this class of instances.

We label the four items $a$, $b$, $c$, and $d$, where (without loss of generality) $a$ and $b$ refer to elements of the first $2$-tuple for agent 1, $\nTupleElementForAgent{1}{1}$, and $c$ and $d$ refer to the items in $\nTupleElementForAgent{1}{2}$.
If the $2$-tuples for agent 2 also pair items this way, then the item graph is two isolated edges (equivalently, two disconnected 2-cycles, allowing for parallel edges), shown in \cref{disconnected-cycle}.
In this case, there are two valid $2$-colorings leading to distinct partitions $(\{a,c\},\{b,d\})$ and $(\{a,d\},\{b,c\})$.

Otherwise, again without loss of generality, assume that agent 2 pairs item $a$ with item $c$ in one of their $2$-tuples.
This implies that the item graph consists of a single $4$-cycle, as seen in \cref{single-cycle}.
This graph admits a unique coloring (up to isomorphism),
which would suggest a unique symEF1 partition,
but \cref{prop-no-unique-2} proves the existence of a second distinct symEF1 allocation.

This result can be extended. For any number of items $m \ge 4$ and two agents,
suppose agent 2's valuations are obtained from agent 1's by making agent 2's least-favorite item equal to agent 1's most-valued item, i.e., $\sigma^2(j) = \sigma^1(j+1 \bmod m)$ for $j \in [m]$.
The resulting item graph is an even cycle having a unique 2-coloring, but it is easy to find such instances with nonunique symEF1 allocations.
Indeed, we conjecture that this holds for all instances with $n$ agents and $m > n$ items; see \Cref{con:2-symef1-exists}.

\begin{figure}
    \centering
    \begin{subfigure}[t]{0.45\textwidth}
        \centering
        \begin{tikzpicture}[node distance={20mm}, thick, main/.style = {draw, circle}]
            \node[main] (1) {$a$};
            \node[main] (2) [right of=1] {$b$};
            \node[main] (3) [below of=1] {$c$}; 
            \node[main] (4) [right of=3] {$d$};
            \draw (1) to [out=45, in=135, looseness=1] (2);
            \draw (1) to [out=315, in=225, looseness=1] (2);
            \draw (3) to [out=45, in=135, looseness=1] (4);
            \draw (3) to [out=315, in=225, looseness=1] (4);
        \end{tikzpicture}
        \caption{A graph consisting of two disconnected even cycles has two distinct colorings leading to two unique symEF1 allocations via the sufficient condition of \cref{thm:symEF1}.}
        \label{disconnected-cycle}
    \end{subfigure}\hfill
    \begin{subfigure}[t]{0.45\textwidth}
        \centering
        \begin{tikzpicture}[node distance={20mm}, thick, main/.style = {draw, circle}]
            \node[main] (1) {$a$};
            \node[main] (2) [right of=1] {$b$};
            \node[main] (3) [below of=1] {$c$}; 
            \node[main] (4) [right of=3] {$d$};
            \draw (1)--(2);
            \draw (2)--(4);
            \draw (4)--(3);
            \draw (3)--(1);
        \end{tikzpicture}
        \caption{A single cycle graph which has a unique coloring and thus has a unique symEF1 allocation by our sufficient condition. We know for $n=2$ and $m=4$ there exists at least two symEF1 allocations.\label{single-cycle}}
    \end{subfigure}
\end{figure}

\section{Relationship with Existing Fairness Criteria}
\label{sec:comparision-with-existing-criteria}

In this section, we establish that symEF1 does not coincide with other strengthenings of EF1 that, on the surface, appear to be closely related to the property of symEF1 allocations that all bundles are nearly equal in value.

\subsection{No Guarantees for Symmetric Envy Freeness up to Any Good}

Recently, there has been concerted effort to determine if there always exists an allocation that is \emph{envy free up to \emph{any} good} (\emph{EFX})~\citepar{gourves2014near,CarKur19_EF1-MNW-PO},
in which an agent's envy is bounded by the \emph{minimum} value of an item in any envied bundle, rather than the weaker maximum in an EF1 allocation.
Existence of an EFX allocation has been shown for three agents and some additional special cases~\citepar{AmaBirFilHolVou21_MNW-EFX,chaudhury2024EFX}, but in general it remains an open question.

An EFX allocation has similarities to a symEF1 partition in that an agent can only value their assigned bundle a little less than any other bundle.
Special cases of existence for EFX allocations have been studied that also guarantee symEF1 allocations.
For example,
    \citeaut{PlaRou21_EFX-same-ordering} %
prove that if all agents rank items in the same order, then an EFX allocation exists;
\cref{thm:symEF1} implies that a symEF1 allocation exists for this setting, as the set of indexed $n$-tuples is the same for all agents.

However, in general, the two concepts of symEF1 and EFX do not imply one another.
While an EFX allocation requires that an agent's assigned bundle has high value to the agent relative to other bundles, it allows for situations in which the agent has little value for these other bundles;
this would prevent the allocation from being symmetrically fair.
\cref{ex:not-guaranteed} contains an instance with no symEF1 allocation, but an EFX allocation exists as there only three agents.
Conversely, a symEF1 allocation might not be EFX, since removing the maximum-valued item (to satisfy EF1) permits significantly more flexibility.
For instance, if two agents identically value three items at $1$, $1/2$, and $1/2$,
then the allocation putting a $1/2$-valued item into its own bundle is symEF1 but not EFX.

We now briefly consider a symmetric extension of EFX. 

\begin{definition}
An allocation is \emph{symmetrically envy free up to any good} (\emph{symEFX}) if
    \begin{equation*}
        \bundlevalue{i}{\bundle{k}} \ge v_{i}(A_{\ell}) - \minItemValAgentBundle{i}{A_{\ell}}, \quad\text{for all } i,k,\ell \in [n].
    \end{equation*}
\end{definition}
The following example shows that a symEFX allocation is not guaranteed to exist for any number of agents, even when a symEF1 allocation does exist.

\begin{example}
    Given $n$ agents and $m = n + 1$ items, for $i \in [n], j \in [m]$, define valuations
    \begin{equation*}
        \itemvalue{i}{j} = 
        \begin{cases}
            1 & \text{if } i=j, \\
            \varepsilon & \text{if } i \neq j.
        \end{cases}
    \end{equation*}
    In any allocation, some bundle will have at least two items, and there is an agent that will value this bundle at least $1+\epsilon$ and every other bundle at most $\epsilon$.
    Hence, a symEFX allocation does not exist.
    However, a symEF1 allocation exists, by placing each item $j \in [n]$ in a separate bundle and item $n+1$ in any bundle.
    \exqed
 \end{example}

\subsection{Maximum Nash Welfare Solutions May Not Be SymEF1}
\label{sec:max-nash-welfare}

Another focus in the literature has been on allocations that maximize the \emph{Nash welfare}, defined as the product of the agents' values for their assigned bundles.
Such an allocation is referred to as a \emph{maximum Nash welfare} (\emph{MNW}) solution, and is known to provide an allocation that is both EF1 and \emph{Pareto optimal} under additive valuations~\citepar{CarKurMouProShaWan16_ec-unreasonable-fairness-mnw,CarKur19_EF1-MNW-PO}.
An allocation is Pareto optimal if, in any other allocation in which some agent has strictly higher value for their assigned bundle, some agent has strictly lower value.

One may wonder whether there is a relation between an MNW solution and a symEF1 allocation.
The following example shows how the two concepts may differ.

\begin{example}
\label{ex:MNW}

\cref{tab:MNW} gives the values of two agents for six items.
\cref{fig:MNWex-MNWsol} shows an MNW allocation on the left, which is not symEF1, and a symEF1 allocation on the right. \exqed
\end{example}

\begin{table}
\centering
\caption{Valuations for two agents and six items such that the MNW solution is not a symEF1 allocation.}
\label{tab:MNW}
\begin{tabular}{lcccccc}
	\toprule
	 			& Item $a$ & Item $b$ & Item $c$ & Item $d$ & Item $e$ & Item $f$ \\
	\midrule
	 Agent~\player{1} & 1 & 2 & 3 & 4 & 5 & 6 \\
	 Agent~\player{2} & 3 & $1+\eps$ & 3 & 1 & 3 & $1+2\eps$ \\
	 \bottomrule
\end{tabular}
\end{table}

We note that because an MNW solution is guaranteed to be EF1, for any set of valuations such that the MNW solution is not a symEF1 allocation, we incur a \emph{social welfare price} to choosing a symEF1 allocation, i.e., the total value across all agents decreases.

\begin{figure}
\centering
\begin{subfigure}[t]{0.45\textwidth}
\centering
  \begin{tikzpicture}[scale=0.45]
    \coordinate (start) at (0,0);
    \coordinate (item_width) at (1,0);
	\coordinate (bundle1_translate) at (0,0);
	\coordinate (bundle2_translate) at ($2*(item_width)$);
    \coordinate (agent1_translate) at (0,0);
    \coordinate (agent2_translate) at ($2*(bundle2_translate)+(item_width)$);
    \coordinate (interdistance) at ($2*(item_width)$);
	\coordinate (arrow_start) at ($(agent2_translate)+(bundle2_translate)+(item_width)+(interdistance)$);
	\coordinate (arrow_length) at ($3*(item_width)$);
    \coordinate (translate) at ($(arrow_start)+(arrow_length)+(interdistance)$);
	\coordinate (translate2) at ($2*(translate)$);
	\coordinate (brace_height) at (0,13.5);
	
    \coordinate (p1_1_ht) at (0,1);
    \coordinate (p1_2_ht) at (0,2);
    \coordinate (p1_3_ht) at (0,3);
    \coordinate (p1_4_ht) at (0,4);
    \coordinate (p1_5_ht) at (0,5);
    \coordinate (p1_6_ht) at (0,6);
    \coordinate (p2_1_ht) at (0,3);
    \coordinate (p2_2_ht) at (0,1.0001);
    \coordinate (p2_3_ht) at (0,3);
    \coordinate (p2_4_ht) at (0,1);
    \coordinate (p2_5_ht) at (0,3);
    \coordinate (p2_6_ht) at (0,1.0002);
    
  	\def\BundleOneItemOne{$b$}
	\def\BundleOneItemTwo{$d$}
	\def\BundleOneItemThree{$f$}
	
    \coordinate (p1_11_ht) at (p1_2_ht);
    \coordinate (p1_12_ht) at (p1_4_ht);
    \coordinate (p1_13_ht) at (p1_6_ht);
    
    \coordinate (p2_11_ht) at (p2_2_ht);
    \coordinate (p2_12_ht) at (p2_4_ht);
    \coordinate (p2_13_ht) at (p2_6_ht);
    
  	\def\BundleTwoItemOne{$a$}
	\def\BundleTwoItemTwo{$c$}
	\def\BundleTwoItemThree{$e$}
	
    \coordinate (p1_21_ht) at (p1_1_ht);
    \coordinate (p1_22_ht) at (p1_3_ht);
    \coordinate (p1_23_ht) at (p1_5_ht);
    
    \coordinate (p2_21_ht) at (p2_1_ht);
    \coordinate (p2_22_ht) at (p2_3_ht);
    \coordinate (p2_23_ht) at (p2_5_ht);
    
	\coordinate (p1) at ($(start)+1.5*(item_width)+(0,-0.75)$);
    \coordinate (p1_11) at ($(start)+(agent1_translate)$);
    \coordinate (p1_12) at ($(p1_11_ht)+(agent1_translate)$);
    \coordinate (p1_13) at ($(p1_11_ht)+(p1_12_ht)+(agent1_translate)$);
    \coordinate (p1_21) at ($(start)+(bundle2_translate)+(agent1_translate)$);
    \coordinate (p1_22) at ($(p1_21_ht)+(bundle2_translate)+(agent1_translate)$);
    \coordinate (p1_23) at ($(p1_21_ht)+(p1_22_ht)+(bundle2_translate)+(agent1_translate)$);
    
	\coordinate (p2) at ($(start)+1.5*(item_width)+(0,-0.75)+(agent2_translate)$);
    \coordinate (p2_11) at ($(start)+(agent2_translate)$);
    \coordinate (p2_12) at ($(p2_11_ht)+(agent2_translate)$);
    \coordinate (p2_13) at ($(p2_11_ht)+(p2_12_ht)+(agent2_translate)$);
    \coordinate (p2_21) at ($(start)+(bundle2_translate)+(agent2_translate)$);
    \coordinate (p2_22) at ($(p2_21_ht)+(bundle2_translate)+(agent2_translate)$);
    \coordinate (p2_23) at ($(p2_21_ht)+(p2_22_ht)+(bundle2_translate)+(agent2_translate)$);
    
    \node at (p1) () {Agent 1};
    \draw[draw=black,fill=white] ($(p1_11)$) rectangle ++($(p1_11_ht)+(item_width)$) node [pos=0.5] {\BundleOneItemOne};
    \draw[draw=black,fill=white] ($(p1_12)$) rectangle ++($(p1_12_ht)+(item_width)$) node [pos=0.5] {\BundleOneItemTwo};
    \draw[draw=black,fill=white] ($(p1_13)$) rectangle ++($(p1_13_ht)+(item_width)$) node [pos=0.5] {\BundleOneItemThree};
    \draw[draw=black,fill=white] (p1_21) rectangle ++($(p1_21_ht)+(item_width)$) node [pos=0.5] {\BundleTwoItemOne};
    \draw[draw=black,fill=white] (p1_22) rectangle ++($(p1_22_ht)+(item_width)$) node [pos=0.5] {\BundleTwoItemTwo};
    \draw[draw=black,fill=white] (p1_23) rectangle ++($(p1_23_ht)+(item_width)$) node [pos=0.5] {\BundleTwoItemThree};

    \node at (p2) () {Agent 2};
    \draw[draw=black,fill=white] ($(p2_11)$) rectangle ++($(p2_11_ht)+(item_width)$) node [pos=0.5] {\BundleOneItemOne};
    \draw[draw=black,fill=white] ($(p2_12)$) rectangle ++($(p2_12_ht)+(item_width)$) node [pos=0.5] {\BundleOneItemTwo};
    \draw[draw=black,fill=white] ($(p2_13)$) rectangle ++($(p2_13_ht)+(item_width)$) node [pos=0.5] {\BundleOneItemThree};
    \draw[draw=black,fill=white] (p2_21) rectangle ++($(p2_21_ht)+(item_width)$) node [pos=0.5] {\BundleTwoItemOne};
    \draw[draw=black,fill=white] (p2_22) rectangle ++($(p2_22_ht)+(item_width)$) node [pos=0.5] {\BundleTwoItemTwo};
    \draw[draw=black,fill=white] (p2_23) rectangle ++($(p2_23_ht)+(item_width)$) node [pos=0.5] {\BundleTwoItemThree};

	\end{tikzpicture}
	\caption{An MNW allocation with $A_1 = \{b,d,f\}$ and $A_2 = \{a,c,e\}$.
	The Nash welfare of this allocation is $v_1(A_1) \cdot v_2(A_2) = 12 \cdot 9 = 108$.}
	\label{fig:MNWalloc}
  \end{subfigure}
  \quad
  \begin{subfigure}[t]{0.45\textwidth}
  \centering
    \begin{tikzpicture}[scale=0.5]
    
    \coordinate (start) at ($(start)+(translate)$);

  	\def\BundleOneItemOne{$c$}
	\def\BundleOneItemTwo{$d$}
	\def\BundleOneItemThree{$f$}

    \coordinate (p1_11_ht) at (p1_3_ht);
    \coordinate (p1_12_ht) at (p1_4_ht);
    \coordinate (p1_13_ht) at (p1_6_ht);
    
    \coordinate (p2_11_ht) at (p2_3_ht);
    \coordinate (p2_12_ht) at (p2_4_ht);
    \coordinate (p2_13_ht) at (p2_6_ht);    
  	\def\BundleTwoItemOne{$a$}
	\def\BundleTwoItemTwo{$b$}
	\def\BundleTwoItemThree{$e$}
    
    \coordinate (p1_21_ht) at (p1_1_ht);
    \coordinate (p1_22_ht) at (p1_2_ht);
    \coordinate (p1_23_ht) at (p1_5_ht);
    
    \coordinate (p2_21_ht) at (p2_1_ht);
    \coordinate (p2_22_ht) at (p2_2_ht);
    \coordinate (p2_23_ht) at (p2_5_ht);

	\coordinate (p1) at ($(start)+1.5*(item_width)+(0,-0.75)$);
    \coordinate (p1_11) at ($(start)+(agent1_translate)$);
    \coordinate (p1_12) at ($(start)+(p1_11_ht)+(agent1_translate)$);
    \coordinate (p1_13) at ($(start)+(p1_11_ht)+(p1_12_ht)+(agent1_translate)$);
    \coordinate (p1_21) at ($(start)+(bundle2_translate)+(agent1_translate)$);
    \coordinate (p1_22) at ($(start)+(p1_21_ht)+(bundle2_translate)+(agent1_translate)$);
    \coordinate (p1_23) at ($(start)+(p1_21_ht)+(p1_22_ht)+(bundle2_translate)+(agent1_translate)$);
    
	\coordinate (p2) at ($(start)+1.5*(item_width)+(0,-0.75)+(agent2_translate)$);
    \coordinate (p2_11) at ($(start)+(agent2_translate)$);
    \coordinate (p2_12) at ($(start)+(p2_11_ht)+(agent2_translate)$);
    \coordinate (p2_13) at ($(start)+(p2_11_ht)+(p2_12_ht)+(agent2_translate)$);
    \coordinate (p2_21) at ($(start)+(bundle2_translate)+(agent2_translate)$);
    \coordinate (p2_22) at ($(start)+(p2_21_ht)+(bundle2_translate)+(agent2_translate)$);
    \coordinate (p2_23) at ($(start)+(p2_21_ht)+(p2_22_ht)+(bundle2_translate)+(agent2_translate)$);
    
    \node at (p1) () {Agent 1};
    \draw[draw=black,fill=white] ($(p1_11)$) rectangle ++($(p1_11_ht)+(item_width)$) node [pos=0.5] {\BundleOneItemOne};
    \draw[draw=black,fill=white] ($(p1_12)$) rectangle ++($(p1_12_ht)+(item_width)$) node [pos=0.5] {\BundleOneItemTwo};
    \draw[draw=black,fill=white] ($(p1_13)$) rectangle ++($(p1_13_ht)+(item_width)$) node [pos=0.5] {\BundleOneItemThree};
    \draw[draw=black,fill=white] (p1_21) rectangle ++($(p1_21_ht)+(item_width)$) node [pos=0.5] {\BundleTwoItemOne};
    \draw[draw=black,fill=white] (p1_22) rectangle ++($(p1_22_ht)+(item_width)$) node [pos=0.5] {\BundleTwoItemTwo};
    \draw[draw=black,fill=white] (p1_23) rectangle ++($(p1_23_ht)+(item_width)$) node [pos=0.5] {\BundleTwoItemThree};

    \node at (p2) () {Agent 2};
    \draw[draw=black,fill=white] ($(p2_11)$) rectangle ++($(p2_11_ht)+(item_width)$) node [pos=0.5] {\BundleOneItemOne};
    \draw[draw=black,fill=white] ($(p2_12)$) rectangle ++($(p2_12_ht)+(item_width)$) node [pos=0.5] {\BundleOneItemTwo};
    \draw[draw=black,fill=white] ($(p2_13)$) rectangle ++($(p2_13_ht)+(item_width)$) node [pos=0.5] {\BundleOneItemThree};
    \draw[draw=black,fill=white] (p2_21) rectangle ++($(p2_21_ht)+(item_width)$) node [pos=0.5] {\BundleTwoItemOne};
    \draw[draw=black,fill=white] (p2_22) rectangle ++($(p2_22_ht)+(item_width)$) node [pos=0.5] {\BundleTwoItemTwo};
    \draw[draw=black,fill=white] (p2_23) rectangle ++($(p2_23_ht)+(item_width)$) node [pos=0.5] {\BundleTwoItemThree};

  \end{tikzpicture}
  \caption{A symEF1 allocation with $A_1 = \{c,d,f\}$ and $A_2 = \{a,b,e\}$, which has a Nash welfare of $v_1(A_1) \cdot v_2(A_2) = 13 \cdot 7 = 91$.}
  \label{fig:symEF1alloc1}
  \end{subfigure}
  \caption{MNW and symEF1 allocations for ~\cref{ex:MNW}. The MNW allocation is EF1, and in fact entirely envy free, with agent $1$ receiving bundle $1$, but \emph{not} symEF1.}
  \label{fig:MNWex-MNWsol}
\end{figure}

Our initial exploration of extensions of symmetric fairness beyond EF1 is limited and leaves many settings to explore.
A promising direction is imposing additional assumptions on the valuation functions, such as seeking guarantees when there are only two (or a small constant number) distinct valuation functions (agent types), as studied by \citeaut{bu2023fair}.

\section{Simulation Results with Five or Fewer Agents}
\label{sec:simulation}

While we have proved that symEF1 allocations exist under certain restrictive assumptions, these conditions need not necessarily hold for instances to admit a symEF1 partition.
In this section, we report simulation results for $n \le 5$ agents to approximate the density of symEF1 allocations while varying both the number of items $m$ and the valuations for the set of agents.
We provide an integer programming formulation for obtaining a symEF1 allocation.
To reduce optimization-related bottlenecks in running time, we also introduce a greedy heuristic that is often effective at finding symEF1 allocations.
The heuristic relies on progressively more complex local updates,
and we analyze how often these are required,
which may be of independent interest in seeking instances that humans find challenging.
Finally, we evaluate if the range of valuations plays a role in the existence of symEF1 allocations.

\subsection{Verifying Existence of SymEF1 Allocations}
\label{sec:simulation:mip}

Since our target is exactly identifying which instances admit symEF1 allocations,
and we have shown in \cref{ex:not-necessary} that our sufficient condition in \cref{thm:symEF1} is not a necessary one,
we do not focus on the vertex coloring perspective in these experiments.

\subsubsection{Finding SymEF1 Allocations via an Integer Program}
We instead formulate an integer programming model whose feasible solutions correspond to symEF1 allocations.
Let $x_{kj}$ be a binary variable representing that bundle $k \in [n]$ contains item $j \in [m]$.
Define $y_{ij\ell}$ as a binary variable denoting that item $j \in [m]$ is removed by agent $i \in [n]$ from bundle $\ell \in [n]$.
We add constraints to enforce that each item is placed in exactly one bundle, at most one item is removed from each bundle, an item can only be removed if it is in a bundle, and every agent is envy free up to one good when receiving any bundle.
Note that multiple integer-feasible solutions may exist for a given symEF1 allocation, 
because the $y$ variables do not require that the maximum-valued item must be removed.
\begin{equation*}
    \begin{aligned}
            &\sum_{k \in [n]} x_{kj} = 1
                 &\quad&\text{for all $j \in [m]$}
                 &\quad\quad&{\text{(one bundle per item)}}
            \\
            &
                \sum_{j \in [m]} y_{ij\ell} \le 1
                 &\quad&\text{for all $i,\ell \in [n]$}
                 &\quad\quad&{\text{(agent $i$ can remove}}
                 \\[-15pt]
                & && &&{\text{one item from bundle $\ell$)}}
            \\
            &%
                y_{ij\ell} \le x_{\ell j}
                 &\quad&\text{for all $i,\ell \in [n]$, $j \in [m]$}
                 &\quad\quad&{\text{(can only remove}}\phantom{\sum_{j\in[m]}}
                 \\[-15pt]
                    & && &&{\text{if item is present)}}
            \\
            &\sum_{j \in [m]} \itemvalue{i}{j} x_{kj}
            \ge \sum_{j \in [m]} \itemvalue{i}{j} (x_{\ell j} - y_{ij\ell})
                &\quad&\text{for all $i,k,\ell \in [n]$, $k \ne \ell$} 
                    &\quad\quad&{\text{(agent $i$ is EF1}}
                    \\[-15pt]
                    & && &&{\text{with bundle $k$)}}
            \\
            &x \in \{0,1\}^{n \times m},\, y \in \{0,1\}^{n \times m \times n}.
    \end{aligned}
\end{equation*}

\subsubsection{Finding SymEF1 Allocations via a Heuristic}
As integer programming also belongs to the class of $\mathcal{NP}$-hard problems,
we will only resort to the integer program if the following greedy heuristic summarized in \cref{alg:symef1-heuristic} fails to construct a symEF1 allocation.
The heuristic maintains a partial symEF1 allocation and iteratively extends it.
At the end, either all items are assigned to bundles, or the heuristic returns ``symEF1 allocation not found''.
We consider three simple cases for extending a partial allocation with an item $j$ not currently in a bundle:
\begin{enumerate}[label={\textbf{Case \arabic*}:},ref=\arabic*,leftmargin=*]
    \item \label{case1} add item $j$ to an existing bundle;
    \item \label{case2} add item $j$ to a bundle after moving one item to another bundle;
    \item \label{case3} add item $j$ to a bundle after swapping items between two bundles.
\end{enumerate}
We track how often each of these cases succeeds in our experiments, along with the number of times we resort to the integer program.
This suggests a notion of instance difficulty that may be of independent interest.

\begin{algorithm}[t]
\caption{Heuristic for finding a symEF1 allocation.}
\label{alg:symef1-heuristic}
\algrenewcommand\algorithmicindent{1.0em}%
\begin{algorithmic}[1]
\Require Set of valuations $\{\itemvalue{i}{j} \suchthat i \in [n],\, j \in [m]\}$ by $n$ agents for $m$ items
\Ensure A symEF1 allocation for $n$ agents, or ``symEF1 allocation not found''

\For{$k \in [n]$}
\Comment{Create initial empty symEF1 allocation}
    \State $\bundle{k} \gets \emptyset$
\EndFor
\State $\mathcal{J} \gets [m]$ \Comment{Set of unallocated items}
\State $\itemallocated \gets \True$

\While{$\mathcal{J} \ne \emptyset$ and $\itemallocated$}
    \State $\itemallocated \gets \False$
        \Comment{Stop if no more items allocated}
    \For{$j \in \mathcal{J}$}

        \For{$k \in [n]$}
        \Comment{\underline{Case~\ref{case1}}: Add item $j$ to a bundle $k$}
            \State $\bundle{k} \gets \bundle{k} \cup \{j\}$
            \State
                \textbf{if} $(\bundle{1},\ldots,\bundle{n})$ is symEF1,
                set $\itemallocated \gets \True$,
                $\mathcal{J} \gets \mathcal{J} \setminus \{j\}$,
                and \textbf{break}
            \State
                \textbf{else} $\bundle{k} \gets \bundle{k} \setminus \{j\}$
                \Comment{Restore original bundles}
        \EndFor

        \State \textbf{if} $\itemallocated$, \textbf{continue}

        \For{$k, \ell \in [n], j_{k} \in \bundle{k}$}
        \Comment{\underline{Case~\ref{case2}}: Move item $j_{k}$ from bundle $k$ to $\ell$, add item $j$}
                \State
                    $\bundle{k} \gets \bundle{k} \cup \{j\} \setminus \{j_{k}\}$;
                    $\bundle{\ell} \gets \bundle{\ell} \cup \{j_{k}\}$
                \State
                    \textbf{if} $(\bundle{1},\ldots,\bundle{n})$ is symEF1,
                    set $\itemallocated \gets \True$,
                    $\mathcal{J} \gets \mathcal{J} \setminus \{j\}$,
                    and \textbf{break}
            \State \textbf{else}
                $\bundle{k} \gets \bundle{k} \cup \{j_{k}\} \setminus \{j\}$;
                $\bundle{\ell} \gets \bundle{\ell} \setminus \{j_{k}\}$
                \Comment{Restore original bundles}
        \EndFor

        \State \textbf{if} $\itemallocated$, \textbf{continue}

        \For{$k, \ell \in [n], j_{k} \in \bundle{k}, j_{\ell} \in \bundle{\ell}$}
        \Comment{\underline{Case~\ref{case3}}: Swap items $j_{k}$ and $j_{\ell}$, add item $j$}
            \State
                    $\bundle{k} \gets \bundle{k} \cup \{j, j_{\ell}\} \setminus \{j_{k}\}$;
                    $\bundle{\ell} \gets \bundle{\ell} \cup \{j_{k}\} \setminus \{j_{\ell}\}$
                \State
                    \textbf{if} $(\bundle{1},\ldots,\bundle{n})$ is symEF1,
                    set $\itemallocated \gets \True$,
                    $\mathcal{J} \gets \mathcal{J} \setminus \{j\}$,
                    and \textbf{break}
            \State \textbf{else}
                $\bundle{k} \gets \bundle{k} \cup \{j_{k}\} \setminus \{j, j_{\ell}\}$;
                $\bundle{\ell} \gets \bundle{\ell} \cup \{j_{\ell}\} \setminus \{j_{k}\}$
                \Comment{Restore original bundles}
        \EndFor
    \EndFor
\EndWhile

\State \textbf{if} $\mathcal{J} = \emptyset$, \Return $(\bundle{1},\ldots,\bundle{n}\}$; \textbf{else} \Return ``symEF1 allocation not found''

\end{algorithmic}
\end{algorithm}

\subsection{Computational Setup}
\label{sec:simulation:computational-setup}
Our code is implemented in Julia~\citepar{Julia-2017}, run with version 1.10.0 (2023-12-25),
and executed on a workstation equipped with an i9-13900K processor and 128 GB DDR5 RAM.
All integer programs are solved with Gurobi~\citepar{Gurobi10.0.3}.
Gurobi is given access to all 32 threads on the machine, which includes both performance and efficiency cores and hence affects our reports of wall clock running times.

For a fixed number of agents $n$ and items $m$, we generate an valuation matrix uniformly at random with integer entries in $\{0,1,\ldots,M\}$, where the maximum item value $M$ is an input parameter that represents different ways that valuations might be elicited.
We test $n \in \{3,4,5\}$, $m \in \{5,6,7,8,9,10,15\}$, and $M \in \{10,10^2,10^3,10^4\}$.
The reported results are averaged over 100,000 valuation matrices for $n \le 4$ and 10,000 replications for $n=5$.

\subsection{Effectiveness of Heuristic Algorithm}
\label{sec:heuristic}
In our early experiments, we solved the integer programming model for symEF1 allocations for every instance.
This was exorbitantly computationally costly, and limited the scope of our investigation.
We developed the greedy algorithm with this motivation, and in this section, we first explore how successful is this heuristic.

\subsubsection{Success of Heuristic for Two Agents}
We begin with an analysis for $n=2$ agents.
By \cref{cor:n2-exists}, a symEF1 allocation always exists for two agents and can be efficiently found without resorting to the integer program, by finding a bipartition of an associated item graph.
However, our question is:
    how often does \cref{alg:symef1-heuristic} fail to find a symEF1 allocation?
Curiously, the answer---in our limited simulations---is \emph{never}.
Specifically, we run an additional set of experiments with $n=2$, $m \in \{5,\ldots,100\}$, $M = 10^{4}$, and $10^{5}$ replications.
On average across all values of $m$, around 92.1\% of items were added to the outputted symEF1 allocation using Case~\ref{case1}, 7.8\% using Case~\ref{case2}, and the remainder with Case~\ref{case3}, with no integer programming calls required.
Furthermore, the percent of items allocated with Case~\ref{case1} increases with $m$: around 90.2\% of the time when averaging across $m \in [5,10)$, 90.7\% for $m \in [10,20)$, 92.5\% for $m \in [50,60)$, and 93.1\% for $m \in [90,100]$.
The running time for the heuristic across all replications is under 2 seconds when $m \le 10$ and increases to 30 seconds for $m = 100$.

Based on these results, one might naturally conjecture that the greedy heuristic suffices to find symEF1 allocations for all instances when there are two agents.
Unfortunately, this is not the case, as we show next in \cref{exp:2swap-not-enough},
in which a partial symEF1 allocation of eight items cannot be extended to a ninth item, even using the swaps of Case~\ref{case3}.

\begin{example}
\label{exp:2swap-not-enough}
Consider a two-agent nine-item instance with item valuations given in \cref{fig:2-swaps-exp}.
Suppose that our heuristic algorithm leads to the following partial symEF1 allocation of the first eight items:
    $\alloc = (\bundle{1},\bundle{2})$,
where $\bundle{1} = \{a,b,c,d\}$
and $\bundle{2} = \{e,f,g,h\}$.
\begin{table}
    \centering
    \caption{A set of valuations that requires more than one swap of items to allocate a new item $j$ when starting with the symEF1 allocation indicated by boxes around the first 8 items.}
    \label{fig:2-swaps-exp}
    	\begin{tabular}{lccccccccc}
		\toprule
	 			& Item $a$ & Item $b$ & Item $c$ & Item $d$ & Item $e$ & Item $f$ & Item $g$ & Item $h$ & Item $j$\\
		\midrule
	 	Agent~\player{1} & \toOne{40} & \toOne{40} & \toOne{40} & \toOne{36} & 33 & 33 & 33 & 33 & 32 \\
		Agent~\player{2} & 33 & 33 & 33 & 33 & \toTwo{36} & \toTwo{40} & \toTwo{40} & \toTwo{40} & 32 \\
        \bottomrule
	   \end{tabular}
\end{table}

Now suppose we wish to allocate the ninth item $j$.
First, we verify that $\alloc$ is symEF1
(since agent 1 and agent 2 are symmetric, we only check for agent 1):
\begin{equation*}
    v_{1}(\bundle{2}) \ge v_{1}(\bundle{1}) - \bar{v}_{1}(\bundle{1}) = 132 \geq 156 - 40 = 116.
\end{equation*}
and $v_{1}(\bundle{1}) > v_{1}(\bundle{2})$.

Next, we see that we cannot add item $j$ to either bundle (Case~\ref{case1}), because
\begin{equation*}
    v_{1}(A_{2}) = 132 < 148 = v_{1}(A_{1}) + v_{1j} - \bar{v}_{1}(A_{1}).
\end{equation*}
We cannot add $j$ to $A_{2}$ because the situation is mirrored for agent 2.

It is not possible for us to move a single item from $A_{2}$ and allocate $j$ to $A_{2}$ (Case~\ref{case2}), because
\begin{equation*}
    v_{1}(A_{2}) - v_{1e} + v_{1j} = 131 < v_{1}(A_{1}) + v_{1e} - \maxItemValAgentBundle{1}{A_{1}}.
\end{equation*}
Additionally, we cannot move one item from $A_{1}$ and allocate $j$ to $A_{1}$ because of the symmetry with agent 2.

Assume that we swap items $b$ and $f$ (Case~\ref{case3}), this is the largest valued swap possible in terms of the difference between the items values for each agent. The resulting allocation of this swap is
\begin{equation*}
    A_{1}^{1} = \{a,c,d,f\} \text{ and } A_{2}^{1} = \{b,e,g,h\}.
\end{equation*}
We still cannot allocate item $j$ because
\begin{equation*}
    v_{1}(A_{1}^1) - \maxItemValAgentBundle{1}{A_{1}^1} + v_{1j} = 141 > 139 = v_{1}(A_{2}^1).
\end{equation*}
Therefore, we must perform more than a single swap of items to allocate item $j$. If we swap items $d$ and $e$ we get the following allocation
\begin{equation*}
    A_{1}^{2} = \{a,c,e,f\} \text{ and } A_{2}^{2} = \{b,d,g,h\}.
\end{equation*}
Allocating item $j$ to $A_{2}^2$ results in a symEF1 allocation. \exqed
\end{example}

Despite this example, the simulation results do suggest that the greedy heuristic is effective when valuations are drawn uniformly at random, and that there is a low density of instances in which a swap (Case~\ref{case3}) does not suffice to produce a symEF1 allocation.

\subsubsection{Need for Integer Programming Subroutine}
Next, we turn to the case of three or more agents.
For ease, we focus on the runs with $M = 10^{4}$ as the maximum item value, as this setting of $M$ represents the hardest instances for a fixed $n$ and $m$.
\cref{tab:stats} reports several relevant statistics:
the first column is the number of agents $n$;
the second column is the number of items $m$;
the third column is the percent of the replications for which a symEF1 allocation was found;
the next three columns give the percent of the items allocated for each of the three cases in \cref{alg:symef1-heuristic}, where the percent is among those instances in which the heuristic found a symEF1 allocation;
the seventh column in turn gives the percent of instances in which the heuristic failed to find a symEF1 allocation, and so the integer programming subroutine was called,
and the eight column shows the total wall clock running time across all replications for each $(n,m)$ parameter combination.
While we include the ``\% symEF1'' column, we reserve our discussion of it to the next subsections.

{
\sisetup{
	group-separator = {,},
	group-minimum-digits = 4,
    table-format = 2.2,
    round-mode=places,
    round-precision=2,
}
\begin{table}
    \centering
    \caption{Statistics for three and four agents on percent symEF1 instances, heuristic algorithm performance, percent integer programming calls, and running time.}
    \label{tab:stats}
    \begin{tabular}{@{} 
        S[table-format=1.0] %
        S[table-format=2.0] %
        S[table-format=3.2] %
        *{2}{S[table-format=2.2]} %
        S[table-format=1.2] %
        S[table-format=2.2] %
        S[table-format=4.1,round-precision=1] %
        @{}}
        \toprule
         {$n$} & {$m$} & {\% symEF1} & {\% Case~1} & {\% Case~2} & {\% Case~3} & {\% IP} & {Time (s)} \\
         \midrule
            3 & 5  & 78.503 & 83.6423934 & 16.3576066 & 0.          & 36.075 & 174.716319 \\
3 & 6  & 98.619 & 80.9999064 & 18.944796  & 0.05529765 & 25.253 & 50.2214305 \\
3 & 7  & 99.656 & 78.1276074 & 21.3513466 & 0.521046   & 19.53  & 47.8894365 \\
3 & 8  & 99.952 & 76.4384862 & 22.6816645 & 0.87984933 & 15.71  & 39.5164564 \\
3 & 9  & 100.    & 75.3969806 & 23.6144504 & 0.98856895 & 13.073 & 32.8472994 \\
3 & 10 & 100.    & 74.5352707 & 24.4182846 & 1.04644473 & 11.357 & 33.1651054 \\
3 & 15 & 100.    & 72.7833533 & 26.197175  & 1.01947172 & 7.233  & 35.9135798 \\
\midrule
 \multicolumn{3}{@{}l}{{Average}}          & 77.417714  & 21.9379034 & 0.64438263 & 18.3187143 & \\
  \midrule
4 & 5  & 23.053 & 88.4752527 & 11.5247473 & 0.          & 76.947     & 2150.25664 \\
4 & 6  & 25.129 & 82.8334782 & 17.1665218 & 0.          & 86.196     & 3688.48959 \\
4 & 7  & 48.248 & 78.9280296 & 20.9911933 & 0.08077704 & 85.498     & 3901.20189 \\
4 & 8  & 90.863 & 75.7993241 & 23.9290356 & 0.27164024 & 80.765     & 890.592803 \\
4 & 9  & 97.643 & 71.9662892 & 26.6950567 & 1.33865411 & 76.361     & 727.14849  \\
4 & 10 & 99.219 & 69.0882255 & 28.364267  & 2.54750747 & 72.899     & 779.073858 \\
4 & 15 & 99.998 & 60.4973526 & 34.1148655 & 5.38778196 & 56.436     & 722.802731 \\
\midrule
  \multicolumn{3}{@{}l}{{Average}}           & 75.3697074 & 23.2550982 & 1.3751944  & 76.4431429 & \\
\midrule
5 & 6  & 5.87  & 89.4094265 & 10.5905735 & 0.          & 94.13      & 941.637128 \\
5 & 7  & 3.41  & 83.7714286 & 16.2285714 & 0.          & 98.75      & 1607.03925 \\
5 & 8  & 6.68  & 80.1724138 & 19.8275862 & 0.          & 99.42      & 2083.34675 \\
5 & 9  & 21.67 & 76.8518519 & 22.9938272 & 0.15432099 & 99.28      & 952.766684 \\
5 & 10 & 70.99 & 74.3157895 & 25.1578947 & 0.52631579 & 99.05      & 568.762863 \\
5 & 15 & 100.   & 59.0332326 & 32.5276939 & 8.43907351 & 96.69      & 416.097824 \\
\midrule
  \multicolumn{3}{@{}l}{{Average}}           & 77.2590238 & 21.2210245 & 1.51995172 & 97.8866667 &            \\
         \bottomrule
    \end{tabular}
\end{table}
}

From this table, we see that among instances in which \cref{alg:symef1-heuristic} succeeds for $m = 15$:
for $n=3$,
    around 73\% of the items are allocated via Case~\ref{case1},
        26\% via Case~\ref{case2},
        and 1\% via Case~\ref{case3}.
The proportion of successful item placements via Case~\ref{case1} decreases to 60\% for $n=4$ and $n=5$,
with a relatively higher reliance on Case~\ref{case3} for $n=5$, suggesting the heuristic is less effective for these instances.
Correspondingly, we see in column ``\% IP'' that in fact, for $n=5$, nearly all (over 96\% for $m=15$) of the instances required the integer program.
This is why, for $n = 5$, we attempt an order of magnitude fewer replications ($10^4$ instead of $10^5$).
However, we do not currently have a simple characterization of instances for which the heuristic will fail,
and whether such instances are indeed more challenging by some metric.

Next, we examine the effect of the number of items $m$.
As seen in the last column, and quite obviously, running time tends to correlate with the percent of integer programming subroutine invocations (that is, how often the heuristic algorithm fails to find a symEF1 allocation).
Moreover, there is a dramatic increase in running time if there is a decrease in the percent of instances in which a symEF1 allocation exists; this is because the integer program will terminate once a feasible solution is found, and proving the lack of a symEF1 solution is more computationally expensive.
In terms of the heuristic, the effectiveness of Case~\ref{case1} decreases as $m$ increases, which is sensible as the first $n$ items can always be placed in this way.
This is the opposite of the effectiveness we observed of this case as a function of $m$ with $n=2$ agents, but we conjecture that for large enough $m$, the phenomenon would occur for any number of agents,
because the instances appear to become easier (from the perspective of finding symEF1 allocations) as $m$ grows.

\subsection{Effect of Number of Items}
\label{sec:simulation:num-items}

To further support our sense that instances become easier as the number of items increases,
in \cref{fig:effect-of-num-items}, we plot the proportion of instances that are symEF1 for every combination of parameters $n$ (number of agents) and $M$ (maximum item value), as a function of the number of items $m$ on the horizontal axis.
As observed in \cref{tab:stats}, by $m=15$ items, \emph{all} instances (in our simulation) have symEF1 allocations.
For $n=5$ agents, the growth is nonmonotonic: there is first a decrease in the proportion of symEF1 instances when $m=7$ compared to $m=6$.
The reason is that, intuitively, when $m$ is close to $n$, most instances will be symEF1, e.g., for 100\% of valuations when $m \le n$.
However, perhaps surprisingly, the instances with fewest symEF1 allocations appear when $m$ is relatively close to $n$.

\begin{figure}
    \centering
    \includegraphics[width=.75\textwidth]{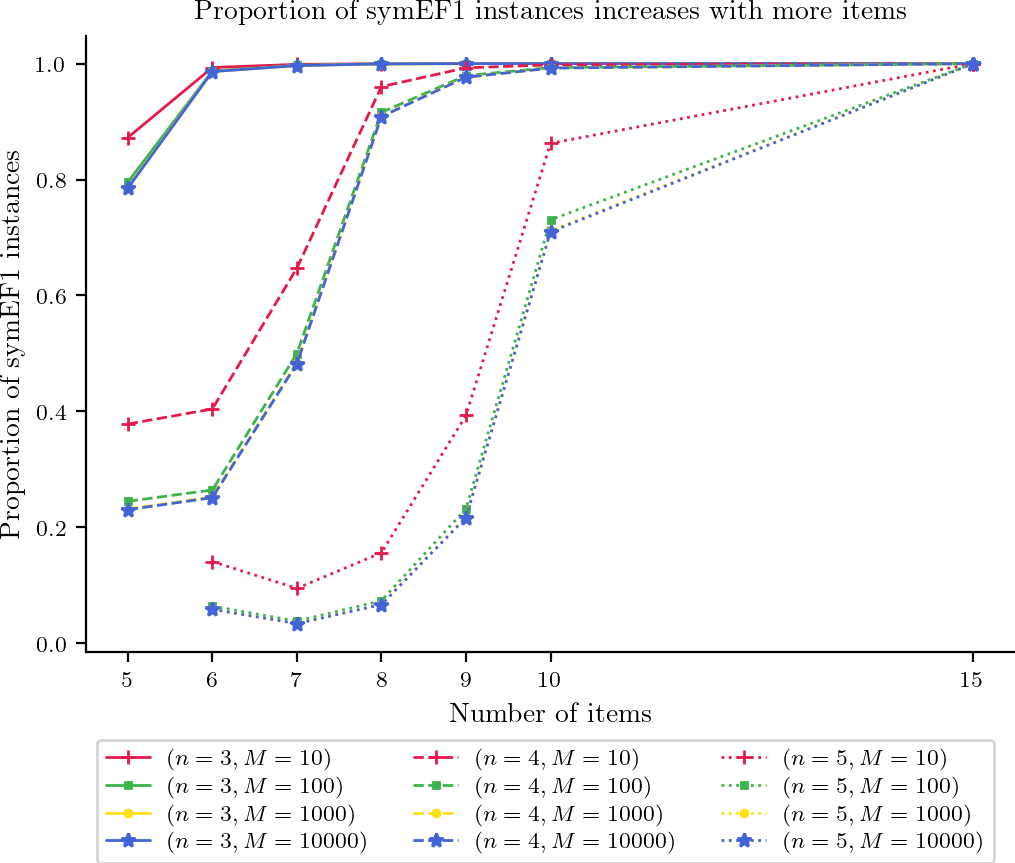}
    \caption{Proportion of instances with symEF1 allocations quickly tends to 1 with more items.}
    \label{fig:effect-of-num-items}
\end{figure}

\subsection{Effect of Maximum Item Value}
\label{sec:simulation:max-item-value}

\begin{figure}
    \centering
    \includegraphics[width=.75\textwidth]{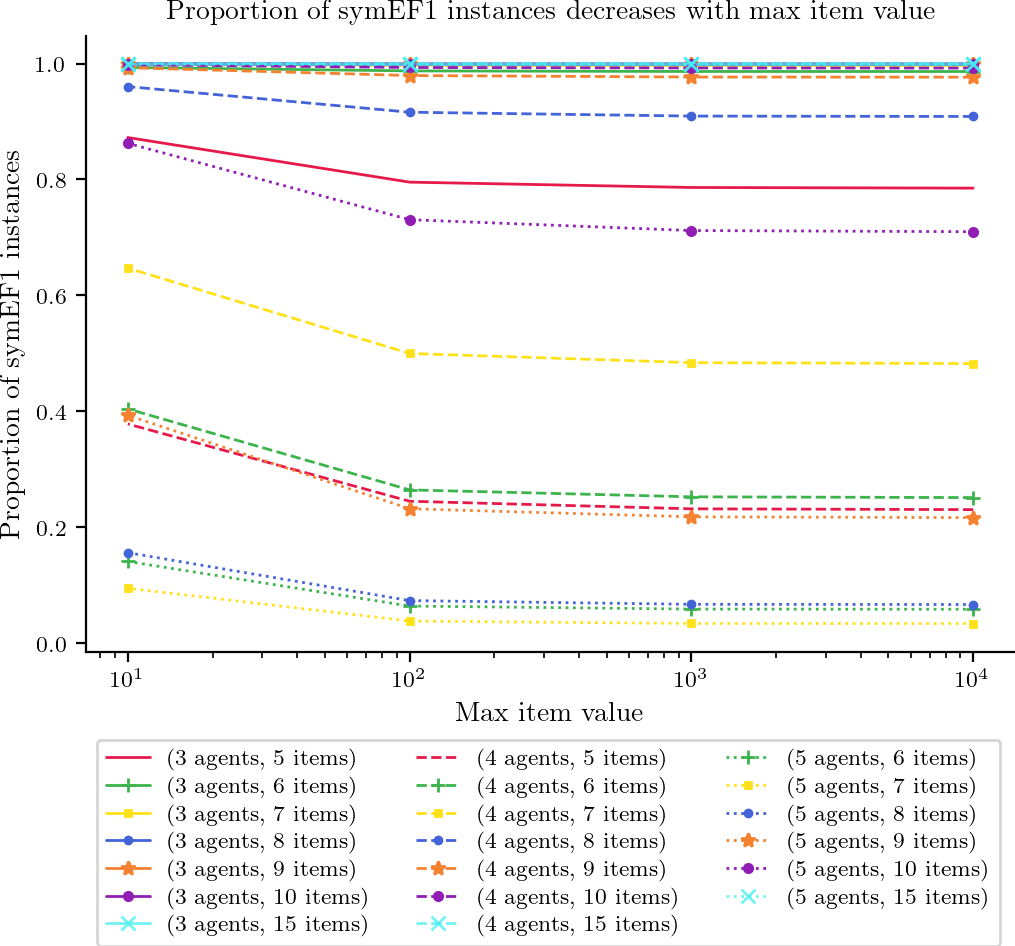}
    \caption{The proportion of instances with symEF1 allocations decreases with higher granularity of valuations.}
    \label{fig:effect-of-max-item-value}
\end{figure}

We end with a short summary of the effect of the maximum item value, $M$.
As seen in \cref{fig:effect-of-max-item-value}, there are fewer symEF1 instances when $M$ is larger.
More concretely, changing $M$ from $10$ to $100$ seems to have a large impact, while further increases in $M$ do not significantly impact the results.
This may inspire---though our results certainly do not directly imply it---a policy implication that, when eliciting agents' valuations, a rating scale of $1$--$10$ may be too coarse for some applications.

\section{Conclusion}
\label{sec:conclusion}

In \cref{sec:coloring-graph}, we proved that a symEF1 allocation always exists for two agents. In \cref{sec:2agents-4items}, we showed that some valuations, regardless of the number of items, may only give a single symEF1 allocation by our sufficient condition. However, we conjecture that multiple distinct symEF1 allocations always exist.

\begin{conjecture}\label{con:2-symef1-exists}
    If there exists a symEF1 allocation for $n$ agents and $m > n$ distinct items, then there are at least two distinct symEF1 allocations.
\end{conjecture}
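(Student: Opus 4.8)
The plan is to separate the argument according to whether the item graph $G(\nTuples)$ is $n$-colorable, leveraging the machinery of \cref{sec:symEF1-sufficiency} where possible and isolating the genuinely hard regime. First I would dispense with the case in which $G(\nTuples)$ is $n$-colorable and has at least two connected components. Here \cref{thm:lowerbound-symef1} immediately yields at least $(n!)^{C-1} \ge n! \ge 2$ distinct symEF1 allocations (for $n \ge 2$, with $C$ the number of components), so the conjecture holds. The substance of the conjecture therefore lies in two stubborn regimes: (i) $G(\nTuples)$ is $n$-colorable but \emph{connected}, where the coloring approach furnishes only one partition; and (ii) $G(\nTuples)$ is \emph{not} $n$-colorable, yet a symEF1 allocation nonetheless exists, as in \cref{ex:not-necessary}. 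Crucially, the extension of \cref{prop-no-unique-2} discussed after its statement---an even cycle with a unique $2$-coloring that still admits a second, unbalanced symEF1 allocation---shows that in these regimes the second allocation cannot come from the coloring/balanced construction and must instead be found directly.

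For regimes (i) and (ii) I would pursue a local-exchange argument starting from the given symEF1 allocation $\alloc = (\bundle{1},\ldots,\bundle{n})$. Since $m > n$, some bundle, say $\bundle{1}$, contains at least two items. The goal is to exhibit a single relocation (moving one item to another bundle) or a single pairwise swap (exchanging one item of $\bundle{k}$ with one item of $\bundle{\ell}$) that produces a \emph{distinct} partition which is still symEF1. Distinctness is automatic once the move actually changes some bundle and items are distinct; the entire difficulty is preserving the symEF1 inequalities $\bundlevalue{i}{\bundle{k}} \ge \bundlevalue{i}{\bundle{\ell}} - \maxItemValAgentBundle{i}{\bundle{\ell}}$ for all $i,k,\ell$. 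I would quantify the slack $s_i(k,\ell) \defeq \bundlevalue{i}{\bundle{k}} - \bundlevalue{i}{\bundle{\ell}} + \maxItemValAgentBundle{i}{\bundle{\ell}} \ge 0$ of each constraint and argue that relocating the least-consequential item (a minimizer of a suitable potential) perturbs each $\bundlevalue{i}{\cdot}$ in a controlled way. The claim to establish is then a \emph{rigidity lemma}: if \emph{no} single relocation or swap remains symEF1, then all slacks must be simultaneously tight in a pattern that forces two items to have identical valuation vectors, contradicting distinctness, or forces $m \le n$.

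The hard part will be proving this rigidity lemma in full generality. Already the appendix proof of \cref{prop-no-unique-2} for the base case $n=2$, $m=4$ is a delicate enumeration, and the number of configurations grows rapidly with $n$ and $m$, so a brute-force case analysis seems intractable. I expect the cleanest route is to recast the set of symEF1 allocations as the integer points of the feasibility system in \cref{sec:simulation:mip} and to run an \emph{exchange argument} on an extremal symEF1 allocation (for instance, the lexicographically extreme one): one shows that tightness of every relocation and swap constraint contradicts either the distinctness of items or the strict inequality $m > n$ via a counting argument on how items distribute across the $n$ bundles. Making this counting robust to unbalanced allocations---which the coloring viewpoint cannot produce---is the central obstacle, and is precisely why the statement remains a conjecture rather than a theorem.
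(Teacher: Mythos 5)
This statement is left as a conjecture in the paper---there is no proof to compare against---and your proposal is, by your own admission, a research plan rather than a proof: everything rests on an unproven ``rigidity lemma.'' The one case you do close, an $n$-colorable item graph with $C \ge 2$ components, follows correctly from \cref{thm:lowerbound-symef1}, and your identification of the two hard regimes (connected $n$-colorable item graphs, and non-$n$-colorable ones as in \cref{ex:not-necessary}) matches the paper's own framing around \cref{prop-no-unique-2}. Note one small imprecision: a \emph{connected} $n$-colorable graph need not have a unique proper $n$-coloring up to color permutation when $n \ge 3$, so regime (i) is not always a single-partition regime.

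The concrete gap is that your rigidity lemma is false as stated, and in fact the conjecture itself fails at $m = n+1$: in \cref{ex:unique-symef1-n2-m3} the paper exhibits two agents and three distinct items (valuation vectors $(1,1)$, $(1/2,\,1/2+\epsilon)$, $(1/2+\epsilon,\,1/2)$, all positively valued and pairwise non-identical) admitting \emph{exactly one} symEF1 partition, namely $(\{a\},\{b,c\})$. Starting from that partition, no single relocation and no single swap yields another symEF1 allocation---there is none to reach---yet the items are distinct and $m = 3 > 2 = n$. So the dichotomy you aim to prove (``all local moves fail $\Rightarrow$ two items are identical or $m \le n$'') cannot hold without strengthening the hypothesis on $m$; the paper's only hard evidence beyond the disconnected case, \cref{prop-no-unique-2}, is for $m = 4 = 2n$. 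Any viable version of your plan must first identify the correct threshold on $m$, and \cref{exp:2swap-not-enough} is a separate warning that a single swap can be insufficient even when a second symEF1 allocation does exist. Beyond that, the extremal-point and counting argument you sketch for the remaining regimes is not developed enough to evaluate.
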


The simulation results in \cref{sec:simulation} indicate that the proportion of symEF1 instances increases rapidly with the number of items. This fuels speculation for our second conjecture about the abundence of both symEF1 instances as well as symEF1 allocations.

\begin{conjecture}
    For any fixed number of agents $n$, as the number of items $m$ grows to infinity, the probability that an instance has a symEF1 allocation goes to $1$.
\end{conjecture}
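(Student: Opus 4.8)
The plan is to prove the conjecture by a direct \emph{simultaneous balancing} argument rather than through the vertex-coloring sufficient condition of \cref{thm:symEF1}. I would first note that the coloring route is unlikely to work here: for $n \ge 3$ each agent contributes a partition of the items into disjoint $n$-cliques, so the item graph $\nTuples$ has average degree growing like $n(n-1)$ and packs many edge-disjoint cliques. For random valuations this edge density places the graph above the typical $n$-colorability threshold, so $G(\nTuples)$ should fail to be $n$-colorable with high probability even as $m \to \infty$. Hence the conjecture must be established by exhibiting symEF1 allocations that our sufficient condition cannot certify, exactly the phenomenon illustrated in \cref{ex:not-necessary}.

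The first real step is a reduction of symEF1 to an approximate-balance statement. Writing $\mu_i \defeq \bundlevalue{i}{[m]}/n$ for agent $i$'s average bundle value, I would prove the clean sufficient condition that a partition is symEF1 whenever, for every agent $i$, $\max_k \bundlevalue{i}{\bundle{k}} - \min_k \bundlevalue{i}{\bundle{k}} \le \min_\ell \maxItemValAgentBundle{i}{\bundle{\ell}}$; this is immediate since then $\bundlevalue{i}{\bundle{k}} \ge \bundlevalue{i}{\bundle{\ell}} - \maxItemValAgentBundle{i}{\bundle{\ell}}$ for all $k,\ell$. For random valuations each bundle has $\Theta(m)$ items, so $\maxItemValAgentBundle{i}{\bundle{\ell}}$ concentrates at the top of the value distribution (essentially $1$ in the continuous model, or the maximum realized value in the discrete one). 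Thus it suffices to find a single partition whose \emph{per-agent bundle-value range} is bounded strictly below this essential supremum for \emph{all} agents simultaneously.

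The heart of the argument is to show such a simultaneously balanced partition exists with high probability, for which I would use a second-moment / local central limit argument. Let $Z$ count the balanced partitions into $n$ equal-size bundles (taking $n \mid m$, padding with zero-valued items as in \cref{cor:sep-n-tup} otherwise) whose $n^2$ bundle-agent sums all land in a fixed window of width $w<1$ about the means $\mu_i$. Computing the expected count $E[Z]$ over the random valuations using an $n^2$-dimensional local CLT for the joint law of the bundle sums under a uniformly random balanced partition gives $E[Z]$ of order $(\text{number of balanced partitions}) \cdot w^{n^2} / m^{n^2/2}$; since the number of balanced partitions is exponential in $m$ while the concentration penalty is only polynomial (as $n$ is fixed), $E[Z] \to \infty$. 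A Paley–Zygmund inequality then yields $Z>0$ with high probability, provided the second moment satisfies $E[Z^2] \le (1+o(1))\,E[Z]^2$, which I would obtain by classifying pairs of partitions by their overlap and showing the dominant, low-overlap contribution factorizes.

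The main obstacle is precisely this second-moment estimate. Pairs of partitions sharing most of their structure induce strong positive correlation between their bundle sums, so controlling $E[Z^2]$ requires a uniform local-limit estimate for the \emph{joint} distribution of bundle sums across two overlapping partitions, i.e.\ understanding how the conditional sums reconcentrate once a constant fraction of items is pinned. If balanced partitions cluster too tightly, the second moment blows up and the method breaks; establishing the required factorization, together with the secondary technicalities (ties in the discrete model, and upgrading ``range $<1$'' to the exact EF1 slack $\maxItemValAgentBundle{i}{\bundle{\ell}}$ via a concentration bound on the largest item in each bundle), is where I expect the difficulty to concentrate. As a fallback I would replace the existential second-moment argument with an algorithmic discrepancy walk: round the fractional solution assigning each item mass $1/n$ to every bundle via a Lovett--Meka / partial-coloring style procedure, and show that on random input it drives every per-agent bundle-value range below a single maximal item, trading the correlation analysis for concentration bounds along the rounding trajectory.
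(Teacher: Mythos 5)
First, note that the statement you are attacking is stated in the paper only as a \emph{conjecture}: the authors offer no proof, only the simulation evidence of \cref{sec:simulation} showing the empirical proportion of symEF1 instances tending to $1$ as $m$ grows. So there is no ``paper proof'' to match your argument against; the question is whether your proposal closes the conjecture on its own, and it does not. Your reduction step is sound --- the condition $\max_k \bundlevalue{i}{\bundle{k}} - \min_k \bundlevalue{i}{\bundle{k}} \le \min_\ell \maxItemValAgentBundle{i}{\bundle{\ell}}$ for all $i$ does imply symEF1 --- but the heart of the argument, the second-moment bound $E[Z^2] \le (1+o(1))E[Z]^2$, is exactly the step you concede you cannot yet establish, and it is the entire difficulty. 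Counting arguments of this type are notorious for failing precisely because near-identical partitions dominate $E[Z^2]$ (the same clustering phenomenon that defeats the vanilla second-moment method in random constraint satisfaction), and neither the overlap classification nor the Lovett--Meka fallback is carried out. A plan whose central lemma is flagged as ``where I expect the difficulty to concentrate'' is a research program, not a proof.

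Two further points need repair even at the level of the plan. First, the joint law of the $n^2$ bundle--agent sums is degenerate: for each agent $i$ the $n$ sums add up to the fixed total $\bundlevalue{i}{[m]}$, so the local CLT lives in dimension $n(n-1)$, not $n^2$; this does not change the conclusion that $E[Z]\to\infty$, but the stated exponents are wrong. Second, your target window $w<1$ is not sufficient. The bound $\min_\ell \maxItemValAgentBundle{i}{\bundle{\ell}}$ must hold for the \emph{specific} partition your count produces, and uniformly over balanced partitions the worst bundle for agent $i$ can consist of that agent's $m/n$ least-valued items, whose maximum concentrates at $1/n$ (not at the essential supremum $1$). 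So you need all $n(n-1)$ sums pinned in a window of width $w < 1/n$, which tightens the constants throughout but is achievable in principle; you gesture at this as a ``secondary technicality,'' but it changes the quantitative target of the main lemma. In short: the skeleton is reasonable and genuinely different from anything in the paper (which only proves existence via $n$-colorability of the item graph, a route you correctly argue is hopeless for random instances with $n \ge 3$), but the conjecture remains open after your write-up.
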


Though we proved a sufficient condition for the existence of a symEF1 allocation in \cref{sec:separating-tuples}, a necessary condition for a symEF1 allocation has eluded us.
We also believe that weaker symmetric fairness notions may always be satisfiable.
Specifically, an EF$k$ allocation allows an agent to remove up to $k$ items from an envied bundle, so a symEF$k$ solution is easier to find when $k$ is larger.
We believe that a sufficient condition for existence of a symEF$k$ allocation is an especially promising direction and offer a final conjecture.

\begin{conjecture}
    For any instance with $n$ agents and $m$ items, a symEF$(n-1)$ allocation exists.
\end{conjecture}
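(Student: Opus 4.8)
\textbf{The plan} is to extend the tuple-separation machinery of \cref{sec:separating-tuples} from the symEF1 regime to the weaker symEF$(n-1)$ target, trading the exact ``one item per $n$-tuple'' requirement---which demands an $n$-coloring that need not exist---for a \emph{relaxed balancing} condition that I expect to be achievable for every instance. As in \cref{cor:sep-n-tup}, I would first reduce to the case $n \mid m$ by padding with zero-valued items, so that a balanced allocation has exactly $s \defeq m/n$ items per bundle. Two easy regimes bookend the argument: the base case $n=2$ is precisely \cref{cor:n2-exists}, since symEF$(n-1)$ is symEF1 there; and whenever $m \le n(n-1)$ every balanced allocation is already symEF$(n-1)$, because each bundle then has at most $n-1$ items and removing $n-1$ items empties any envied bundle. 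The substance therefore lies in the regime $m > n(n-1)$, where bundles contain at least $n$ items and genuine value imbalance between bundles can arise.

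\textbf{Sufficient condition via prefix balance.} For each agent $i$, order the items as $\sigma^i(1),\ldots,\sigma^i(m)$ and, for a bundle $B$ and threshold $\tau$, let $N_B^i(\tau)$ count how many of agent $i$'s top $\tau$ items land in $B$. I would prove, by an Abel-summation refinement of the telescoping argument behind \cref{lem:separate-tuple-ndm}, that if an allocation is \emph{prefix-balanced} for every agent---meaning $\lfloor \tau/n\rfloor - d \le N_B^i(\tau) \le \lceil \tau/n\rceil + d$ for all $i$, $B$, and $\tau$, for a small slack $d$---then $\bundlevalue{i}{\bundle{\ell}} - \bundlevalue{i}{\bundle{k}}$ is controlled by the top few items of $\bundle{\ell}$. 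Concretely, writing each bundle value as a weighted sum of the nonincreasing gaps $\bundlevalue{i}{\sigma^i(\tau)} - \bundlevalue{i}{\sigma^i(\tau+1)}$ and feeding in the two-sided prefix bounds yields, after telescoping, a bound of the form $\bundlevalue{i}{\bundle{\ell}} - \bundlevalue{i}{\bundle{k}} \le \sum_{r=1}^{n-1} b^i_r(\bundle{\ell})$, where $b^i_1(\bundle{\ell}) \ge b^i_2(\bundle{\ell}) \ge \cdots$ are the values $\bundlevalue{i}{\cdot}$ of the items of $\bundle{\ell}$ in nonincreasing order---i.e., exactly the symEF$(n-1)$ inequality---provided $d$ is tuned so that the uncancelled head of the telescope spans precisely the top $n-1$ items of $\bundle{\ell}$. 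The delicate point, exactly as in the gap between \cref{lem:separate-tuple-ndm} and the symEF1 statement, is to keep the discarded mass attached to items that actually belong to $\bundle{\ell}$ rather than to a global maximum item; I would enforce this by bounding the rank within $\bundle{\ell}$ of its $r$-th largest item through $N_{\bundle{\ell}}^i$.

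\textbf{Constructing the balanced allocation.} It remains to produce a single allocation that is prefix-balanced for all $n$ agents simultaneously with slack $d$ small enough to force the removal count down to $n-1$. I would phrase this as a coloring/flow feasibility problem: each agent's prefixes impose a laminar family of lower- and upper-bound constraints on the number of its items per color class, and an allocation respecting these laminar families can be sought via an integer program whose constraint matrix is an intersection of $n$ laminar (hence network/matroid) structures---precisely the viewpoint sketched after \cref{cor:n2-exists}. For $n=2$ this intersection is totally unimodular and recovers \cref{cor:n2-exists}; for general $n$ it is not, which is why symEF1 can fail, but the added $(n-1)$-slack should restore feasibility. I would pursue two routes to existence: (i) a Beck--Fiala / discrepancy argument, since each item lies in exactly $n$ prefix constraints, giving a rounding with additive error $O(n)$ per constraint; and (ii) an inductive ``peel a balanced round'' construction that places one full $n$-tuple of each agent at a time while maintaining the running prefix bounds.

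\textbf{Main obstacle.} The crux---and the reason this remains a conjecture---is sharpening the slack to exactly $n-1$. Off-the-shelf discrepancy bounds deliver a removal count that is $O(n)$ but with a constant (or logarithmic) factor larger than $1$, which would only establish symEF$(cn)$ for some $c>1$; and the matroid-intersection constraint matrix loses total unimodularity for $n \ge 3$, so integrality of the balanced allocation is not automatic. Closing this gap---either through a bespoke combinatorial construction whose discrepancy slack yields precisely $n-1$ removals, or through an inductive reduction from $n$ to $n-1$ agents that spends exactly one unit of removal per merged agent---is where I expect the real difficulty to lie.
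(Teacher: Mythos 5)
The statement you are addressing is stated in the paper as an open \emph{conjecture}: the paper offers no proof of it, so there is nothing on the authors' side to compare your argument against. Your proposal, by your own admission, is also not a proof --- it is a research plan whose central step is left open. The parts you do settle are correct but cover only the degenerate regimes: the case $n=2$ is exactly \cref{cor:n2-exists}, and the case $m \le n(n-1)$ is immediate because a balanced bundle then has at most $n-1$ items, so removing $n-1$ items empties any envied bundle and nonnegativity of valuations finishes the argument. Neither of these touches the substance of the conjecture.

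The genuine gap is the entire main case $m > n(n-1)$, and it is worth naming why your two proposed routes are unlikely to close it as stated. Route (i), the discrepancy/prefix-balance argument, is essentially the approach already taken in the literature the paper cites in \cref{sec:related-literature} for \emph{consensus $1/k$-division up to $c$ goods}: discrepancy-theoretic rounding of the laminar prefix constraints yields a removal count of order $O(n)$ (or with logarithmic factors), but the constants are strictly larger than what is needed to land on exactly $n-1$; if that machinery gave $c = n-1$, the conjecture would already be a theorem. Route (ii), the ``peel a balanced round'' induction, runs into the same obstruction that defeats \cref{thm:symEF1} for $n \ge 3$: placing one item from each agent's current $n$-tuple into each bundle is an $n$-coloring problem on a round of the item graph, and \cref{ex:not-guaranteed} shows this can fail even for a single round with $n=3$; you would need to quantify how much prefix imbalance each failed round injects and show the accumulated damage over all $\lceil m/n \rceil$ rounds is absorbed by $n-1$ removals, which is precisely the uncontrolled quantity. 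A correct proof would need a new idea here --- for instance, a bespoke combinatorial invariant that is maintained exactly, not up to $O(n)$ slack --- and your sketch does not supply one.
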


Note that, since $m$ can be much larger than $n$, removing $n-1$ items may be only a small fraction of all goods.

\ifanon
\else
\begin{acks}
    We appreciate the input and time provided by
    T.J. Jefferson and Kayla Oates,
    who, as undergraduate students at the University of Florida, delved into an early proof attempt for the two-agent case.
    The idea of pursuing multiple symEF1 allocations is due to a conjecture by Yu Yang.
\end{acks}
\fi

\biblio{}

\appendix

\section{Proof of \texorpdfstring{\cref{prop-no-unique-2}}{Proposition~17}}
\label{app:proofs}

\begingroup
\def\thetheorem{\ref{prop-no-unique-2}}
\begin{proposition}
    Any instance with two agents and four distinct items admits at least two distinct symEF1 allocations.
\end{proposition}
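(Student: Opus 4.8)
The plan is to normalize the instance using agent~1's preferences and then reduce to the single hard configuration identified in the discussion around \cref{disconnected-cycle} and \cref{single-cycle}. First I would relabel items so that $v_{1a}\ge v_{1b}\ge v_{1c}\ge v_{1d}$, making agent~1's indexed $2$-tuples $\nTupleElementForAgent{1}{1}=\{a,b\}$ and $\nTupleElementForAgent{1}{2}=\{c,d\}$. Agent~2's tuple partition is one of the three ways to split four items into two pairs. If it is $\{a,b\}\mid\{c,d\}$ (the same as agent~1's), the item graph is two disjoint edges, which admits the two distinct proper $2$-colorings $(\{a,c\},\{b,d\})$ and $(\{a,d\},\{b,c\})$; both separate every agent's tuples and so are symEF1 by \cref{thm:symEF1}, giving two distinct allocations at once. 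Otherwise the partition crosses agent~1's, and by relabeling $c$ and $d$ I may assume agent~2 pairs $a$ with $c$, so the item graph is the $4$-cycle of \cref{single-cycle} whose unique coloring yields the single allocation $P_3\defeq(\{a,d\},\{b,c\})$. The entire difficulty is producing a \emph{second} symEF1 allocation in this crossing case.

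For the crossing case I would examine the two remaining balanced partitions $P_1\defeq(\{a,b\},\{c,d\})$ and $P_2\defeq(\{a,c\},\{b,d\})$. Since $P_2$ separates agent~1's tuples and $P_1$ separates agent~2's tuples, and since the proof of \cref{lem:separate-tuple-ndm} establishes each agent's guarantee using only the separation of that agent's own tuples (here $n=2$ divides $m=4$), it follows that $P_2$ is automatically symEF1 for agent~1 and $P_1$ is automatically symEF1 for agent~2. Hence $P_2$ is symEF1 precisely when agent~2 is EF1-satisfied with it, and a one-line computation shows this holds unless the smaller value in agent~2's top tuple strictly exceeds the sum of its bottom tuple; symmetrically $P_1$ is symEF1 unless $v_{1b}>v_{1c}+v_{1d}$. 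So if either ``failure'' inequality is violated, $P_1$ or $P_2$ is a second symEF1 allocation distinct from $P_3$ and we are finished.

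The crux is the remaining subcase, where \emph{both} failure inequalities hold: $v_{1b}>v_{1c}+v_{1d}$ and the analogous strict dominance for agent~2's top tuple. The key observation is that agent~1's top tuple $\{a,b\}$ and agent~2's top tuple (either $\{a,c\}$ or $\{b,d\}$) share exactly one item $x\in\{a,b\}$, which is large for both agents. I would then verify that the unbalanced partition $(\{x\},[m]\setminus\{x\})$ is symEF1. For agent~1, receiving $\{x\}$ requires $v_{1x}\ge v_{1c}+v_{1d}$, which holds because $v_{1x}\ge v_{1b}>v_{1c}+v_{1d}$. For agent~2, the other item of its top tuple is the maximum of $[m]\setminus\{x\}$, so after its removal receiving $\{x\}$ requires $v_{2x}$ to dominate the sum of agent~2's bottom tuple, which is exactly the second failure inequality (as $v_{2x}$ is at least the smaller top-tuple value). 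The reverse EF1 direction is trivial for both agents, since removing the singleton's own item cancels its value. Thus $(\{x\},[m]\setminus\{x\})$ together with $P_3$ gives two distinct symEF1 allocations.

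I expect the main obstacle to be exactly this last subcase, namely recognizing that the conditions under which both balanced alternatives $P_1$ and $P_2$ fail are precisely the dominance inequalities that certify the singleton split $(\{x\},[m]\setminus\{x\})$ as symEF1; identifying the correct singleton $x$ as the common element of the two top tuples is what makes the argument uniform. Secondary care is needed to confirm that each allocation produced is genuinely distinct from $P_3$ (immediate from bundle cardinalities in the singleton case and from the differing pairs in the balanced cases) and to handle ties within an agent's valuations, which do not disturb the argument since every binding inequality is invoked in the appropriate weak or strict direction.
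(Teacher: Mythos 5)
Your proof is correct and covers all cases. The non-crossing case is handled by the two colorings of the two-isolated-edges item graph; in the crossing case you always have $P_3$ from the unique coloring, and the trichotomy ($P_1$ works, or $P_2$ works, or both dominance inequalities hold and the singleton $(\{x\},[m]\setminus\{x\})$ works) is exhaustive. I verified the key computations: the only way $P_1$ (resp.\ $P_2$) can fail is for the agent whose tuple partition it coincides with, via the inequality $\min$-of-top-tuple $>$ sum-of-bottom-tuple; and exactly those two failure inequalities certify the singleton split for both agents, since $x$ lies in both agents' top tuples and the favorite remaining item in $[m]\setminus\{x\}$ is, for each agent, the other member of that agent's top tuple. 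The underlying case analysis matches the paper's, which enumerates the three balanced partitions and the singleton partitions and splits on whether $v_{1b} > v_{1c}+v_{1d}$, but your organization differs in two genuinely useful ways. First, you note that the per-agent guarantee in the proof of \cref{lem:separate-tuple-ndm} uses only the separation of that agent's own tuples, so $P_2$ is automatically fine for agent 1 and $P_1$ for agent 2 with no computation; the paper instead verifies the relevant EF1 inequalities for each candidate allocation by hand. Second, in the hard subcase you identify the second allocation uniformly as the singleton of the common element of the two agents' top tuples, whereas the paper restricts to explicit orderings for agent 2 with $a$ as agent 2's favorite item and argues that one of $(\{a\},\{b,c,d\})$ or $(\{a,c\},\{b,d\})$ must be symEF1, deferring the remaining orderings to symmetry. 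Your route is tighter, makes the WLOG reductions explicit, and reuses the paper's machinery rather than redoing the arithmetic; the paper's version is more elementary in that it never invokes the tuple-separation lemma inside the case analysis.
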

\addtocounter{theorem}{-1}
\endgroup
\begin{proof} %
For compactness and readability, we assign each possible allocation a number.
\begin{alignat}{2}
    &A_{1} = \{a\},     &\quad&A_{2} = \{b,c,d\} \tag{\ensuremath{\mathcal{A}_1}}\label{m4lb-alloc1} \\
    &A_{1} = \{b\},     &&A_{2} = \{a,c,d\} \tag{\ensuremath{\mathcal{A}_2}}\label{m4lb-alloc2} \\
    &A_{1} = \{c\},     &&A_{2} = \{a,b,d\} \tag{\ensuremath{\mathcal{A}_3}}\label{m4lb-alloc3} \\
    &A_{1} = \{d\},     &&A_{2} = \{a,b,c\} \tag{\ensuremath{\mathcal{A}_4}}\label{m4lb-alloc4} \\
    &A_{1} = \{a,b\},   &&A_{2} = \{c,d\} \tag{\ensuremath{\mathcal{A}_5}}\label{m4lb-alloc5} \\
    &A_{1} = \{a,c\},   &&A_{2} = \{b,d\} \tag{\ensuremath{\mathcal{A}_6}}\label{m4lb-alloc6} \\
    &A_{1} = \{a,d\},   &&A_{2} = \{b,c\} \tag{\ensuremath{\mathcal{A}_7}}\label{m4lb-alloc7}
\end{alignat}
We fix an ordering for agent 1's values: $v_{1a} > v_{1b} > v_{1c} > v_{1d}$. We assume unique item valuations and for convenience we define $\mathcal{A}_{i} \defeq (A_{1}^i, A_{2}^i)$. For this ordering, allocations \ref{m4lb-alloc3} and \ref{m4lb-alloc4} are not symEF1 because
\begin{equation*}
    v_{1}(A_{1}^3) = v_{1c} < v_{1b} + v_{1d} = v_{1}(A_{2}^3) - \maxItemValAgentBundle{1}{A_{2}^3}
\end{equation*}
and
\begin{equation*}
    v_{1}(A_{1}^4) = v_{1d} < v_{1b} + v_{1c} = v_{1}(A_{2}^4) - \maxItemValAgentBundle{1}{A_{2}^4}.
\end{equation*}
Also, for this ordering, allocations \ref{m4lb-alloc6} and \ref{m4lb-alloc7} are guaranteed to be symEF1 for agent 1 because
\begin{equation*}
    v_{1}(A_{1}^6) = v_{1a} + v_{1c} > v_{1d} = v_{1}(A_{2}^6) - \maxItemValAgentBundle{1}{A_{2}^6}
\end{equation*}
\begin{equation*}
    v_{1}(A_{2}^6) = v_{1b} + v_{1d} > v_{1c} = v_{1}(A_{1}^6) - \maxItemValAgentBundle{1}{A_{1}^6}
\end{equation*}
and
\begin{equation*}
    v_{1}(A_{1}^7) = v_{1a} + v_{1d} > v_{1c} = v_{1}(A_{2}^7) - \maxItemValAgentBundle{1}{A_{2}^7}
\end{equation*}
\begin{equation*}
    v_{1}(A_{2}^7) = v_{1b} + v_{1c} > v_{1d} = v_{1}(A_{1}^7) - \maxItemValAgentBundle{1}{A_{1}^7}
\end{equation*}
If $v_{1b} > v_{1c} + v_{1d}$, then allocation \ref{m4lb-alloc1} and \ref{m4lb-alloc2} are also symEF1 for agent 1 otherwise $v_{1b} \le v_{1c} + v_{1d}$ implies that allocation \ref{m4lb-alloc5} is viable for agent 1.

Thus the set of viable allocations is $\{\ref{m4lb-alloc5},\ref{m4lb-alloc6},\ref{m4lb-alloc7}\}$ or $\{\ref{m4lb-alloc1}, \ref{m4lb-alloc2}, \ref{m4lb-alloc6}, \ref{m4lb-alloc7}\}$. There is the possibility that $v_{1a} > v_{1c} + v_{1d}$ and $v_{1b} \leq v_{1c} + v_{1d}$. This gives us the viable allocations $\{\ref{m4lb-alloc1}, \ref{m4lb-alloc5}, \ref{m4lb-alloc6}, \ref{m4lb-alloc7}\}$ however we do not consider this because as we will see $\{\ref{m4lb-alloc5},\ref{m4lb-alloc6},\ref{m4lb-alloc7}\}$ is sufficient to have at least two symEF1 allocations.

We first consider the set of allocations $\{\ref{m4lb-alloc5},\ref{m4lb-alloc6},\ref{m4lb-alloc7}\}$. Notice that regardless of the ordering for agent 2 at least two of the allocations from the set must be symEF1 because they separate all of their index pairs. This is because either $(a,b), (a,c),$ or $(a,d)$ is an index pair for agent 2 and thus two of the three allocations from the set have separated this pair and are therefore symEF1 for agent 2. Notice that if the set of allocations $\{\ref{m4lb-alloc5},\ref{m4lb-alloc6},\ref{m4lb-alloc7}\}$ is symEF1 for agent 1, then we are guaranteed to have two symEF1 allocations for both agents.

If the set of symEF1 allocations for agent 1 is instead $\{\ref{m4lb-alloc1}, \ref{m4lb-alloc2}, \ref{m4lb-alloc6}, \ref{m4lb-alloc7}\}$, then there are two sets of orderings for agent 2
\begin{equation*}
    v_{2a} > v_{2c} > v_{2b} > v_{2d}
\end{equation*}
and
\begin{equation*}
    v_{2a} > v_{2d} > v_{2b} > v_{2c}
\end{equation*}
such that $\ref{m4lb-alloc5}$ and either $\ref{m4lb-alloc6}$ or $\ref{m4lb-alloc7}$ is a symEF1 allocation. Thus we have only shown that one of the allocations in $\{\ref{m4lb-alloc1}, \ref{m4lb-alloc2}, \ref{m4lb-alloc6}, \ref{m4lb-alloc7}\}$ must be symEF1 for agent 2. Next we consider $\ref{m4lb-alloc1}$ and assume that agent 2's valuations is $v_{2a} > v_{2c} > v_{2b} > v_{2d}$ as the other ordering can be proven symmetrically. This allocation is symEF1 if and only if 
\begin{equation*}
    v_{2a} \ge v_{2b} + v_{2d}
\end{equation*}
because
\begin{equation*}
    v_{2b} + v_{2c} + v_{2d} \ge 0.
\end{equation*}
Notice that if $v_{2a} < v_{2b} + v_{2d}$ and thus \ref{m4lb-alloc1} is not symEF1. Then $\ref{m4lb-alloc6}$ must be a symEF1 allocation because
\begin{equation*}
    v_{2}(A_{1}) = v_{2a} + v_{2c} \ge v_{2d} = v_{2}(A_{2}) - \maxItemValAgentBundle{2}{A_{2}}
\end{equation*}
and
\begin{equation*}
    v_{2}(A_{2}) = v_{2b} + v_{2d} \ge v_{2c} = v_{2}(A_{1}) - \maxItemValAgentBundle{2}{A_{1}}. 
\end{equation*}
Therefore either $\ref{m4lb-alloc1}$ or $\ref{m4lb-alloc6}$ must be a symEF1 allocation for agent $2$ and by the ordering we chose $\ref{m4lb-alloc7}$ is a symEF1 allocation because it separates agent 2's index pairs. Thus there are always at least two symEF1 allocations for both agents in this set. Therefore regardless of valuations for both agents there is always at least 2 symEF1 allocations when $n = 2$ and $m = 4$.
\end{proof}

\end{document}